\documentclass{ws-ijmpa}

\usepackage[super]{cite}
\usepackage{xcolor}
\usepackage[verbose,hypertexnames=false]{hyperref}
\hypersetup{colorlinks=false,allbordercolors=blue,pdfborderstyle={/S/U/W 1}}
\usepackage[vcentermath]{youngtab}

\usepackage{young}
\usepackage{slashed}

\newcommand{\be}{\begin{equation}}
\newcommand{\ee}{\end{equation}}
\newcommand{\bea}{\begin{eqnarray}}
\newcommand{\eea}{\end{eqnarray}}

\begin{document}

\markboth{Ling-Xiao Xu}{Proving $\chi$SB in QCD-like Theories from Anomaly Matching}

%
\catchline{}{}{}{}{}
%

\title{To Break or Not to Break: A Review of a No-Go Theorem on Chiral Symmetry Breaking in QCD-like Theories}

\author{Ling-Xiao Xu}

\address{Abdus Salam International Centre for Theoretical Physics,\\
Strada Costiera 11, 34151, Trieste, Italy\\
lxu@ictp.it}

\maketitle


\begin{abstract}
This is a pedagogical review of some recent progress in rigorously proving chiral symmetry breaking in a class of QCD-like theories that closely resemble the real-world QCD --- the $SU(N_c)$ Yang-Mills theory coupled to $N_f$ flavors of massless quarks in the fundamental representation. Based on ’t Hooft anomaly matching and persistent mass conditions, a general no-go theorem is formulated: assuming that the theory flows in the infrared to a fully color-screened, infrared-free phase described by color-singlet hadrons, symmetry and anomaly constraints necessarily imply spontaneous chiral symmetry breaking; conversely, any phase with unbroken chiral symmetry must retain unscreened color charges, thereby ruling out a fully color-singlet hadron description in the infrared. While these results have been widely assumed, the recent developments reviewed here establish them with a new level of rigor. The persistent mass condition, carefully formulated here, plays a central role --- just as it does in the Vafa–Witten theorem on unbroken vectorlike symmetries.
\end{abstract}

\keywords{Anomalies; Continuous Symmetries; Spontaneous Symmetry Breaking; Quantum Chromodynamics (QCD); Quark Confinement; Color Screening.}

\ccode{PACS numbers: 11.10.-z, 11.15.-q, 11.30.Rd, 11.30.Qc, 12.38.Aw}

\tableofcontents

\section{Introduction and Summary}	
\label{sec:intro}

Our most fundamental understanding of Nature so far is based on the Standard Model (SM) of particle physics, where the theory is constructed with the gauge group $G_{\text{SM}}=SU(3)_c\times SU(2)_L\times U(1)_Y$~\footnote{More precisely, there is an ambiguity regarding the global form of $G_{\text{SM}}$ distinguished by a central $\mathbb{Z}_6$ group which acts trivially on all the SM particles~[\citen{ORaifeartaigh:1986agb, Hucks:1990nw, Tong:2017oea}] (see also~[\citen{Davighi:2019rcd, Wan:2019gqr, Wang:2021ayd, Wang:2020mra}]). From a particle physics perspective, it is pointed out recently that discovering new particles beyond the SM that are charged under the central group can potentially determine the global form of $G_{\text{SM}}$~[\citen{Alonso:2024pmq, Li:2024nuo, Koren:2024xof}].} and the matter fields furnish various chiral representations of $G_{\text{SM}}$. 
At low energies, the electroweak gauge group $SU(2)_L\times U(1)_Y$ is Higgsed~\footnote{See however~[\citen{ Frohlich:1980gj, Frohlich:1981yi, tHooft:1979yoe}] for a gauge-invariant formulation of the Higgs mechanism (see also recently~[\citen{Egger:2017tkd, Maas:2020kda}]). Somewhat contrary to the standard textbook presentation, such a formulation highlights the fact (i.e., the Elitzur theorem) that gauge-dependent order parameters must have vanishing vacuum expectation values~[\citen{Elitzur:1975im}], and gauge ``symmetry'' is only a redundancy in our theoretical description, hence it cannot be spontaneously broken. Relatedly, the lore is that there is no phase transition between the Higgs and confining regions, known as the Higgs-confinement continuity~[\citen{Fradkin:1978dv, Banks:1979fi}].} down to the $U(1)_{\text{EM}}$ electromagnetism, while the color gauge group $SU(3)_c$ gets strongly coupled and all the charged particles --- quarks and gluons --- form into color-singlet hadrons. The theory of the strong interactions based on the $SU(3)_c$ gauge group is known as quantum chromodynamics (QCD), which was established by the discovery of asymptotic freedom~[\citen{Gross:1973id, Politzer:1973fx}].

Although the ultraviolet (UV) theory of QCD is firmly established, rigorously understanding its infrared (IR) phases is surprisingly challenging due to the strongly-coupled nature of the theory.~\footnote{It is well-known that strong coupling versus weak coupling depends on the degrees of freedom being used in the description of the theory; see e.g. as demonstrated by the Seiberg duality~[\citen{Seiberg:1994pq}]. Here, by strong coupling, we mean that QCD is described by quarks and gluons whose interaction is strong at the infrared. (See also e.g.~[\citen{Terning:1997xy, Schmaltz:1998bg, Armoni:2008gg, Sannino:2009qc, Sannino:2009me, Mojaza:2011rw, Sannino:2011mr, Karasik:2022gve, Cacciapaglia:2024mfy}] for various proposals on non-supersymmetric dualities, which may or may not be relevant to the real world.)} 
\begin{itemlist}
\item One particular phenomenon is the so-called confinement, which is only sharply defined in the limit where all the quarks have infinite masses. In modern terminology, this notion of \emph{genuine confinement} is characterized by the electric one-form global symmetry~[\citen{Gaiotto:2014kfa}], which coincides with the center of the gauge group. See~[\citen{Cherman:2022eml, Cherman:2023xok}] on related discussions on the (impossibility of) emergence of one-form symmetry and confinement.
\item On the other hand, in the limit where all the quarks are massless, all color charges are screened at long distances, and along the way the theory exhibits spontaneous chiral symmetry breaking ($\chi$SB)~[\citen{Nambu:1961tp, Nambu:1961fr}].
\end{itemlist}
Some clarification of terminology is needed to avoid confusion: in literature, the phenomenon of ``\emph{color screening}'' is also commonly referred to as ``confinement''. However, strictly speaking, these two notions have to be distinguished~[\citen{Greensite:2011zz, Greensite:2017ajx, Greensite:2018ebg, Greensite:2023qfx}] (see also~[\citen{Dumitrescu:2023hbe}]). Since in the former the Wilson lines are screened, one might use the term ``color screening'' or ``\emph{screening confinement}'' interchangeably.~\footnote{Due to the presence of massless quarks in the fundamental representation, all the Wilson lines are screened by the creation of quark-antiquark pairs. Accordingly, the Wilson lines do not obey the area law. At large distances $r$, the potential between two test charges goes like $V(r)\sim \text{constant}$ instead of $V(r)\sim r$; see e.g.~[\citen{Intriligator:1995au}]. This notion of ``screening confinement'' is also realized in the s-confining supersymmetric gauge theories; see~[\citen{Csaki:1996sm, Csaki:1996zb}] for a systematic classification. On the other hand, the notion of ``genuine confinement'' is well-defined when all quarks fully decouple; this is the limit where the theory reduces to the pure Yang-Mills theory, where the Wilson lines are unscreened and believed to exhibit the area law.}
Here, we are not trying to provide an alternative definition of confinement for QCD with fundamental quarks; rather, we will use the intuitive notion of ``screening confinement'' in the rest of the review, which implies color-singlet hadrons in the IR. 

More broadly, we are interested in a class of QCD-like theories in $(3+1)$ spacetime dimensions. In the UV, the theory is described by the $SU(N_c)$ Yang-Mills theory coupled to $N_f$ flavors of massless quarks in the fundamental representation. 
It is believed that such a theory admits many IR phases depending on the values of $N_c$ and $N_f$: 
\begin{romanlist}[(ii)]
\item \label{phase1} For $N_f\geq \frac{11}{2} N_c$, the theory flows to an IR-free fixed point of quarks and gluons. 
\item \label{phase2} For $N_f^{\text{CFT}}\leq N_f < \frac{11}{2} N_c$, the theory flows to an interacting conformal field theory (CFT). (The precise value of the lower edge $N_f^{\text{CFT}}$ is unknown~\footnote{However, see~[\citen{Neil:2011yag}] for a collection of results from lattice simulations and~[\citen{Gies:2005as, Braun:2006jd, Braun:2009ns, Braun:2010qs, Goertz:2024dnz}] for results from functional renormalization group (fRG) approach. See also e.g.~[\citen{Miransky:1998dh, Miransky:1996pd, Kaplan:2009kr}] for discussion on the dynamics inside the conformal window and conformality lost.}, while the upper edge $11 N_c/2$ is justified by~[\citen{Caswell:1974gg, Banks:1981nn}] in the large $N_c$ limit.)
\item \label{phase3} For $2\leq N_f < N_f^{\text{CFT}}$, the theory flows to an IR-free fixed point of color-singlet hadrons with $\chi$SB.~\footnote{See e.g.~[\citen{Casher:1979vw}] for a nice argument supporting spontaneous $\chi$SB, and~[\citen{Banks:1979yr}] on the relation between $\chi$SB and the spectral property of the Dirac operator. See also~[\citen{Engel:2014cka, Engel:2014eea, Giusti:2015kwf, Faber:2017alm}] for related lattice computations. Furthermore, $\chi$SB is supported by large $N_c$ analysis~[\citen{Coleman:1980mx, Veneziano:1980xs, Sato:2022ayb}].}
\item \label{phase4} For $N_f=1$, the theory is gapped with a unique vacuum. 
\item \label{phase5} For $N_f=0$, the $\theta$ parameter becomes physical in the $SU(N_c)$ Yang-Mills theory, the theory is gapped with a unique vacuum for generic $\theta$, whereas there are two degenerate vacua at $\theta=\pi$. (See~[\citen{Witten:1978bc, Witten:1979vv, Witten:1980sp, DiVecchia:1980yfw,  Witten:1998uka}] for large $N_c$ and~[\citen{Gaiotto:2017yup}] for anomaly arguments supporting such a picture.) 
\end{romanlist}
Notice that such a picture of infrared phases is mainly based on empirical evidence, numerical lattice simulations (for a review, see~[\citen{Neil:2011yag}]), and well-motivated guesswork. Very little has been coherently understood or rigorously proved. For instance, there is the Millennium Prize problem on rigorously showing the existence of the Yang-Mills theory (by taking the continuum limit of a finite lattice system) and a mass gap~[\citen{witten-jaffe}]. Other exotic phases may be possible for special values of $N_c$ and $N_f$.

This needs to be contrasted with the $\mathcal{N}=1$ supersymmetric (SUSY) QCD, where a phase diagram for specific $N_c$ and $N_f$ was coherently understood partially thanks to the power of holomorphy~[\citen{Seiberg:1994bz, Seiberg:1994pq}]; see also Ref.~[\citen{Intriligator:1995au}] for a review. 
By introducing additional SUSY-breaking terms, we might extrapolate from the SUSY limit and develop a coherent understanding of the IR phases of ordinary QCD-like theories~[\citen{Aharony:1995zh, Cheng:1998xg, Arkani-Hamed:1998dti, Luty:1999qc, Abel:2011wv, Murayama:2021xfj, Kondo:2021osz, Luzio:2022ccn, Dine:2022req, Csaki:2022cyg, deLima:2023ebw, Bai:2025dys}]. The hope is that the theory remains under analytic control even near the decoupling limit of all the SUSY partners of quarks and gluons, where in this limit the theory reduces to the ordinary QCD without SUSY. There might be a smooth path of sending the SUSY-breaking scale from zero to infinity; one particular challenge is to prove whether phase transitions occur.~\footnote{For instance, near the SUSY limit, there may be vacua that spontaneously break the baryon number. However, since the baryon number (and other vectorlike symmetries) cannot be spontaneously broken in the non-SUSY limit of QCD~[\citen{Vafa:1983tf}], it is important to understand whether the baryon-number-breaking vacuum is the global minimum in the near-SUSY limit. See e.g.~[\citen{Kondo:2025njf}] for a survey supporting the crossover between the SUSY and non-SUSY limits.}

In this review, we do not attempt to develop a coherent understanding of all the IR phases of QCD-like theories for any specific $N_c$ and $N_f$, or their relations with the corresponding SUSY theories. Instead, by leveraging 't Hooft anomaly matching conditions (AMC)~[\citen{tHooft:1979rat}] and persistent mass conditions (PMC)~[\citen{tHooft:1979rat, Preskill:1981sr, Vafa:1983tf}], we will establish a general no-go theorem following the analysis in Refs.~[\citen{Ciambriello:2022wmh, Ciambriello:2024xzd, Ciambriello:2024msu}]. In particular, we prove rigorously that unbroken chiral symmetry is \emph{incompatible} with a fully color-screened, IR-free phase. More specifically, such a phase is characterized by a renormalization-group (RG) flow, where the UV theory is described by a gauge theory of quarks and gluons, while the IR theory is described by color-singlet hadrons (see Eqs.~\eqref{color_screening} and~\eqref{color_screening_2} for a definition).

Summarizing our main results in~[\citen{Ciambriello:2022wmh, Ciambriello:2024xzd, Ciambriello:2024msu}], we proved the following no-go theorem applicable to theories with generic values of $N_c\geq 3$~\footnote{The case of $N_c=2$ needs to be addressed separately, since when the gauge group is $SU(2)$, quarks and antiquarks both transform as the fundamental representation, which is pseudo-real. Notice that all color-singlet hadrons are bosons in the $SU(2)$ gauge theory with fundamental quarks, suggesting that $\chi$SB necessarily occurs once color screening is assumed. For theories with $N_f$ flavors of Dirac fermions, the breaking pattern is $SU(2N_f)\to Sp(2N_f)$, where the unbroken vectorlike symmetry can be justified using Vafa-Witten like arguments~[\citen{Kosower:1984aw}].}, and any $N_f$ no smaller than the smallest nontrivial prime factor of $N_c$~\footnote{For instance, in the real QCD with $N_c=3$, the theorem applies to all $N_f\geq 3$. We will see in Section~\ref{sec:proof_2} where this constraint on $N_f$ comes from, given a fixed $N_c$.}:
\begin{theorem}\label{thm:main}
In any infrared phase of QCD-like theories with the $SU(N_c)$ gauge group and $N_f$ flavors of massless quarks in the fundamental representation, either the free hadron description with full color screening breaks down, or chiral symmetry $SU(N_f)_L\times SU(N_f)_R$ must be spontaneously broken.
\end{theorem}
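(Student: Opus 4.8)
The plan is to argue by contradiction: suppose that in some infrared phase of the $SU(N_c)$ theory with $N_f$ massless fundamental quarks full color screening \emph{does} hold --- so the infrared is an infrared-free theory of color-singlet hadrons --- and yet the chiral symmetry $SU(N_f)_L\times SU(N_f)_R$ is \emph{un}broken. Since the hadronic description is free and realizes the full continuous symmetry $SU(N_f)_L\times SU(N_f)_R\times U(1)_B$ linearly, there are no Nambu--Goldstone bosons, and the only infrared degrees of freedom that can reproduce the ultraviolet 't Hooft anomalies are massless color-singlet fermions, i.e.\ baryons. (By the Vafa--Witten theorem~[\citen{Vafa:1983tf}] the vectorlike subgroup $SU(N_f)_V\times U(1)_B$ is in any case unbroken, so the hypothesis really isolates the \emph{axial} part.) The first step is to formulate this precisely: the infrared spectrum must contain massless baryons sitting in representations $(R_L,R_R)$ of $SU(N_f)_L\times SU(N_f)_R$ with some baryon charge $q\in\mathbb{Z}$, assembled into integer net multiplicities $\ell_{(R_L,R_R,q)}$ (left-handed minus right-handed Weyl fermions).

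Next I would write down the 't Hooft matching conditions explicitly. In the ultraviolet the non-vanishing anomalies of the continuous flavor symmetry are $[SU(N_f)_L]^3=N_c=-[SU(N_f)_R]^3$ and the mixed ones $[SU(N_f)_L]^2\,U(1)_B$ and $[SU(N_f)_R]^2\,U(1)_B$ (the pure $U(1)_B^3$ and the mixed $U(1)_B$--gravitational anomalies cancel between quarks and antiquarks). Matching then becomes a system of Diophantine equations, the primary one being $\sum \ell_{(R_L,R_R,q)}\,A(R_L)=N_c$ with $A$ the cubic anomaly index normalized so that the fundamental contributes $1$; since $A$ vanishes on real representations, the baryons carrying this anomaly must sit in \emph{complex} representations. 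The key structural input is gauge invariance: a color singlet built from $n$ quarks and $\bar n$ antiquarks requires $n-\bar n\equiv 0\ (\mathrm{mod}\ N_c)$ and carries $B=(n-\bar n)/N_c\in\mathbb{Z}$, so the candidate representations are organized by baryon number, the minimal $|B|=1$ baryons corresponding to $SU(N_f)_L$ Young diagrams of $N_c$ boxes.

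The decisive ingredient is the persistent mass condition. Giving a common mass $m\neq 0$ to $k$ flavors and flowing below $m$ reduces the theory to $(N_f-k)$-flavor QCD; any infrared-massless baryon of the $N_f$-flavor theory that is charged under the now explicitly broken chiral rotations of those $k$ flavors must pair into a Dirac fermion and lift, so the surviving massless spectrum --- obtained by branching each $(R_L,R_R,q)$ under $SU(N_f-k)\times SU(k)\times U(1)$ and keeping the $SU(k)$-singlet part --- must coincide with the massless baryon content of $(N_f-k)$-flavor QCD. Because the cubic anomaly is a rigid integer, this forces a compatible ``$N_c$-carrying'' sub-collection at \emph{every} step $SU(N_f)\supset SU(N_f-1)\supset\cdots$, while at the bottom the one-flavor endpoint is gapped with a unique vacuum, so every massless baryon must eventually be lifted. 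Feeding this recursion back into the matching equations, together with the combinatorics of how $A(R_L)$ behaves under branching and the quantization of baryon charges, I would aim to show that no consistent assignment of integers $\ell_{(R_L,R_R,q)}$ exists.

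I expect the main obstacle to be precisely this last step --- upgrading ``no consistent integer solution'' to a theorem valid for \emph{all} $N_c\ge 3$ and \emph{all} $N_f\ge p$, with $p$ the smallest prime factor of $N_c$. Two things must be controlled. First, there are a priori infinitely many candidate baryon representations, since one can always dress a baryon with extra quark--antiquark pairs, adding boxes to both $R_L$ and $R_R$; one has to prove, using the branching/persistent-mass bookkeeping and the boundedness of the anomaly coefficients, that only finitely many can participate and that the ``excited'' towers cannot rescue the matching. Second, the contradiction must be exhibited uniformly rather than case by case, which is where $N_f\ge p$ genuinely enters: the persistent-mass reduction has to be carried down to a theory with at least $p$ flavors, for which a baryon can be built from $p$ equally populated flavors (possible because $p\mid N_c$) and the $\mathbb{Z}_{N_c}$ color center still correlates nontrivially with the chiral sector on gauge-invariant states; for $N_f<p$ the argument has real loopholes, consistent with the familiar fact that for $N_c=3$, $N_f=2$ a massless nucleon doublet already saturates all anomalies, so anomaly matching alone does not force $\chi$SB there and the theorem correctly excludes that case from its scope. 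Reconstructing this uniform argument is essentially the content of Refs.~[\citen{Ciambriello:2022wmh, Ciambriello:2024xzd, Ciambriello:2024msu}].
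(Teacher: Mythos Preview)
Your overall architecture --- contradiction, integer AMC solutions for a putative free color-singlet spectrum, and a PMC-driven recursion carrying integer solutions from $N_f$ to $N_f-1$ flavors --- matches the paper's ``downlifting'' strategy. The genuine gap is in the endpoint of the recursion and the source of the contradiction.

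The paper does \emph{not} run the recursion down to $N_f=1$ and appeal to the one-flavor theory being gapped; the PMC/AMC structure needs the chiral $SU(N_f-h)_L\times SU(N_f-h)_R$ to remain nontrivial, and in any case that is not where the obstruction lives. The recursion terminates at $N_f=p$, the smallest nontrivial prime factor of $N_c$, and the contradiction there is purely algebraic (Lemma~\ref{thm:special_Nf}): for such $N_f$, the $[SU(N_f)_{L,R}]^2\,U(1)_B$ AMC equation has \emph{no} integer solution, because every color-singlet hadron's contribution $T(r_L)\,d(r_R)\,b$ is divisible by $p$ --- established via the $N_f$-ality constraint from color screening together with the center generator of $SU(p)$ and the Weyl dimension formula --- while the UV side equals $1$. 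Your remarks about ``baryons built from $p$ equally populated flavors'' and the ``$\mathbb{Z}_{N_c}$ color center correlating with the chiral sector'' do not capture this; the obstruction is number-theoretic and sits in the \emph{mixed} anomaly, not in the cubic $[SU(N_f)_L]^3$ that you emphasize. A secondary point: the branching step is more delicate than ``the surviving spectrum must coincide with $(N_f{-}k)$-flavor QCD'' suggests. The paper explains (Appendix~\ref{app1:other_approaches}) that naive $N_f$-independence fails in general; Lemma~\ref{thm:downlifting} proves only the weaker but sufficient statement that the \emph{summed} indices $\ell(r')=\sum_r \ell(r)\,\kappa(r\to r')$ solve both AMC and all PMC for $N_f-1$, and it is this, combined with the coherent structure of Fig.~\ref{PMC_coherent}, that makes the induction go through.
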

Our theorem rules out the existence of the  ``confinement without $\chi$SB'' phase~\footnote{Such a phase is possible for other theories~[\citen{Bars:1981se, Poppitz:2019fnp, Seiberg:1994bz, Csaki:1996sm, Csaki:1996zb}] to which our theorem does not apply. The reason is that, in the presence of Yukawa couplings (as in SUSY theories) or chiral fermions, PMC are not valid as the fermion integration measure in the path integral is not positive definite~[\citen{Vafa:1983tf}]. See e.g.~[\citen{Dimopoulos:1981xc}] for concrete models where PMC fails. See also~[\citen{Valenti:2023olg}] for other perspectives on the positivity of fermion integration measure in the presence of Yukawa couplings. } in QCD-like theories.
Indeed, such a result matches the IR phases~(\ref{phase1}),~(\ref{phase2}),~(\ref{phase3}) discussed earlier where chiral symmetry is well-defined. When the hadronic color-screening description does not apply, as in the phases~(\ref{phase1}) and~(\ref{phase2}), chiral symmetry remains unbroken; otherwise, when the hadronic color-screening description applies, as in the phase~(\ref{phase3}), spontaneous $\chi$SB \emph{necessarily} follows. In the phases~(\ref{phase4}) and~(\ref{phase5}), chiral symmetry is not defined.

We note that, while the results implied by our theorem are widely assumed to be true, the novelty of our work in~[\citen{Ciambriello:2022wmh, Ciambriello:2024xzd}] is to establish these results with a level of mathematical rigor based on robust algebraic and analytic methods in quantum field theory (QFT). These methods are applicable in the strongly coupled regimes of quarks and gluons in QCD-like theories. See also~[\citen{Frishman:1980dq, Farrar:1980sn, Schwimmer:1981yy, Cohen:1981iz, Kaul:1981fd, Takeshita:1981sx}] for early related studies, and~[\citen{Ciambriello:2024msu}] for a survey of all the arguments with concrete examples. 

The remainder of the review is organized as follows. In Section~\ref{sec:anomlies}, we briefly review the 't Hooft AMC and their application in understanding $\chi$SB in QCD-like theories, and then show that $\chi$SB occurs for theories with special values of $N_c$ and $N_f$. In Section~\ref{sec:pmc}, we derive the PMC following Vafa and Witten's seminal work, and then discuss some novel algebraic features of PMC that are crucial for our proof. In Section~\ref{sec:proof_2}, we review the strategy --- called ``\emph{downlifting}'' --- that we use to prove Theorem~\ref {thm:main}. Such a strategy combines the arguments in previous sections. Finally, we conclude in Section~\ref{sec:discussion} while emphasizing the open questions and challenges. Comments on other approaches in the literature can be found in~\ref{app1:other_approaches}.

\section{'t Hooft Anomaly Matching Conditions for Chiral Symmetry}	
\label{sec:anomlies}

Anomalies are among the most fundamental and useful defining data of a QFT. Since anomalies are not renormalized under the RG flow~[\citen{Adler:1969er}], and since the 't Hooft anomaly matching condition (AMC) requires that anomalies in the UV description of a theory must also be matched in the IR~[\citen{tHooft:1979rat}], they provide a powerful and robust tool for constraining emergent phenomena across high energy and condensed matter physics. For reviews on the general foundation of anomalies and their numerous applications, see e.g.~[\citen{Weinberg:1996kr, Alvarez-Gaume:1984zlq, Alvarez-Gaume:1985zzv, Harvey:2005it, Bilal:2008qx, Witten:2015aba, Tachikawa_TASI, Arouca:2022psl, tong_gauge_theory, Cheng:2022sgb, Ye:2023uoz, Cordova:2019jnf, Cordova:2019uob}]. 

One way of characterizing the 't Hooft anomalies is to couple the QFT with a certain global symmetry to a background gauge field $A$. A 't Hooft anomaly is present when the partition function $Z[A]$ fails to be gauge invariant under a background gauge transformation. Under the gauge transformation $A\to A^\lambda$ with $\lambda$ being the gauge parameter, $Z[A]$ transforms by a phase that is a local functional of the gauge parameter $\lambda$ and the gauge field $A$, i.e.,
\be
Z[A^\lambda]=Z[A] \ e^{-2\pi i \int_{X} \alpha(\lambda, A)}\;,
\ee
where the phase cannot be removed by adding local counterterms, and $X$ is the spacetime manifold on which the QFT is defined. Consequently, 't Hooft anomalies imply \emph{obstructions to gauging}. However, 't Hooft anomalies of global symmetries do not imply inconsistency of the original theory.

One can also characterize the anomalies using invertible field theories~[\citen{Freed:2004yc, Freed:2019jzd, Freed:2014iua}] --- a local and classical Lagrangian $-2\pi i w(A)$ --- in one higher spacetime dimension, whose partition function is 
\be
\mathcal{A}[A]= e^{2\pi i \int_Y w(A)}
\ee
where $Y$ is the spacetime manifold whose boundary is $X$ supporting the original dynamical QFT, i.e., $X=\partial Y$. A property of the invertible field theory is that under $A\to A^\lambda$, 
\be
\mathcal{A}[A^\lambda]= \mathcal{A}[A] \ e^{2\pi i \int_{X} \alpha(\lambda, A)}\;,
\ee
such that the total partition function with the auxiliary bulk is gauge invariant under background gauge transformation, i.e., 
\be
Z[A] \ \mathcal{A}[A]=Z[A^\lambda] \ \mathcal{A}[A^\lambda]\;. 
\ee
The 't Hooft anomaly of the original QFT supported on the boundary $X$ inflows from the nontrivial invertible field theory in the bulk $Y$~[\citen{Callan:1984sa}]. In condensed matter physics, the bulk theory in the anomaly inflow is closely related to symmetry-protected topological (SPT) phases~[\citen{Chen:2011pg}]; see e.g.~[\citen{Senthil:2014ooa, Witten:2015aba}] for reviews.

The presence of 't Hooft anomalies in the UV theory, according to the anomaly matching argument, implies that the IR theory cannot be described by a unique symmetry-preserving trivially gapped vacuum. Such a perspective, denoted as \emph{obstruction to a trivial mass gap}, can itself characterize 't Hooft anomalies as a defining property. In condensed matter physics, related results are referred to as the Lieb-Schultz-Mattis (LSM) type theorems~[\citen{Lieb:1961fr, Affleck:1986pq, Oshikawa:2000lrt, Hastings:2003zx}]; see e.g.~[\citen{Affleck:1988nt, Tasaki:2022gka}] for reviews. A third, though less well-known, perspective of characterizing 't Hooft anomalies is that the theory admits no symmetry-preserving boundaries~[\citen{Jensen:2017eof, Thorngren:2020yht}], which can be denoted as \emph{obstruction to symmetry-preserving boundaries}.~\footnote{It is still an open problem to understand the precise relations between different perspectives on how to define 't Hooft anomalies for various global symmetries in QFTs in various spacetime dimensions; see e.g.~[\citen{Li:2022drc, Chen:2023hmm, Choi:2023xjw, Seifnashri:2023dpa}] for related discussions.}

\subsection{Chiral Symmetry in QCD-like Theories}
\label{sec:anomaly_2}

Given the above broad introduction on how to characterize 't Hooft anomalies, we will be interested in revisiting their implication for $\chi$SB in QCD-like theories (including the QCD close to the real-world particle physics, a theory with $N_c=3$ and with massless quarks), a problem originally studied by 't Hooft in his Cargese lectures~[\citen{tHooft:1979rat}].

Let us consider the QCD-like theory with $N_f$ flavors of massless quarks. The theory is invariant under transformations of the chiral symmetry group
\be
\mathcal{G}[N_f]=SU(N_f)_L \times SU(N_f)_R \times U(1)_B\, ,
\label{def:G}
\ee
where the left-handed quarks $q_L$ and the right-handed quarks $q_R$ are charged respectively as 
\be
q_L\sim \left(\scriptsize\yng(1)\ , s\ , \frac{1}{N_c}\right), \quad\quad\quad q_R\sim \left(s\ , \scriptsize\yng(1)\ , \frac{1}{N_c}\right)\; ,
\label{eq:quark_charges}
\ee
where $s$ stands for the singlet representation of either $SU(N_f)_L$ or $SU(N_f)_R$.  
From these quantum numbers, one immediately observes that the perturbative triangle anomalies $[SU(N_f)_{L,R}]^3$ and $[SU(N_f)_{L,R}]^2 U(1)_B$ do not vanish, hence there is obstruction to gauging $\mathcal{G}[N_f]$.~\footnote{There are some discrete group elements within $\mathcal{G}[N_f]$ which act trivially on both $q_L$ and $q_R$~[\citen{Yonekura:2019vyz, Tanizaki:2018wtg, Morikawa:2022liz}]. (See also~[\citen{Ciambriello:2024xzd}] for a careful derivation.) One has to take a quotient to remove these discrete central elements to identify the faithful chiral symmetry group. However, since the perturbative triangle anomalies are not sensitive to the discrete quotient, we neglect it here for simplicity.} To gauge $\mathcal{G}[N_f]$ with background gauge fields, 't Hooft suggests introducing additional spectator fermions to cancel the anomalies induced by the quarks~[\citen{tHooft:1979rat}]. 
The spectator fermions are only charged under the $\mathcal{G}[N_f]$ global chiral symmetry group but neutral under the $SU(N_c)$ gauge group. Using the particle physics terminology, we denote these spectator fermions as the ``leptons'' $L$. In the UV, the anomaly cancellation condition between quarks and leptons is given by
\be
\mathcal{A}(q)+\mathcal{A}(L)=0\;, 
\label{anomaly_cancellation_1}
\ee
where $\mathcal{A}(f)$ is the anomaly coefficient of either $[SU(N_f)_{L,R}]^3$ or $[SU(N_f)_{L,R}]^2 U(1)_B$ for quarks and leptons in the spectrum when $f=q, L$. 
Since the anomalies are canceled, the full theory with both quarks and leptons is invariant under the gauge transformations of the background gauge fields coupled to $\mathcal{G}[N_f]$, and there is no obstruction to gauging $\mathcal{G}[N_f]$.

\begin{figure}[t]
\centerline{\includegraphics[width=5cm]{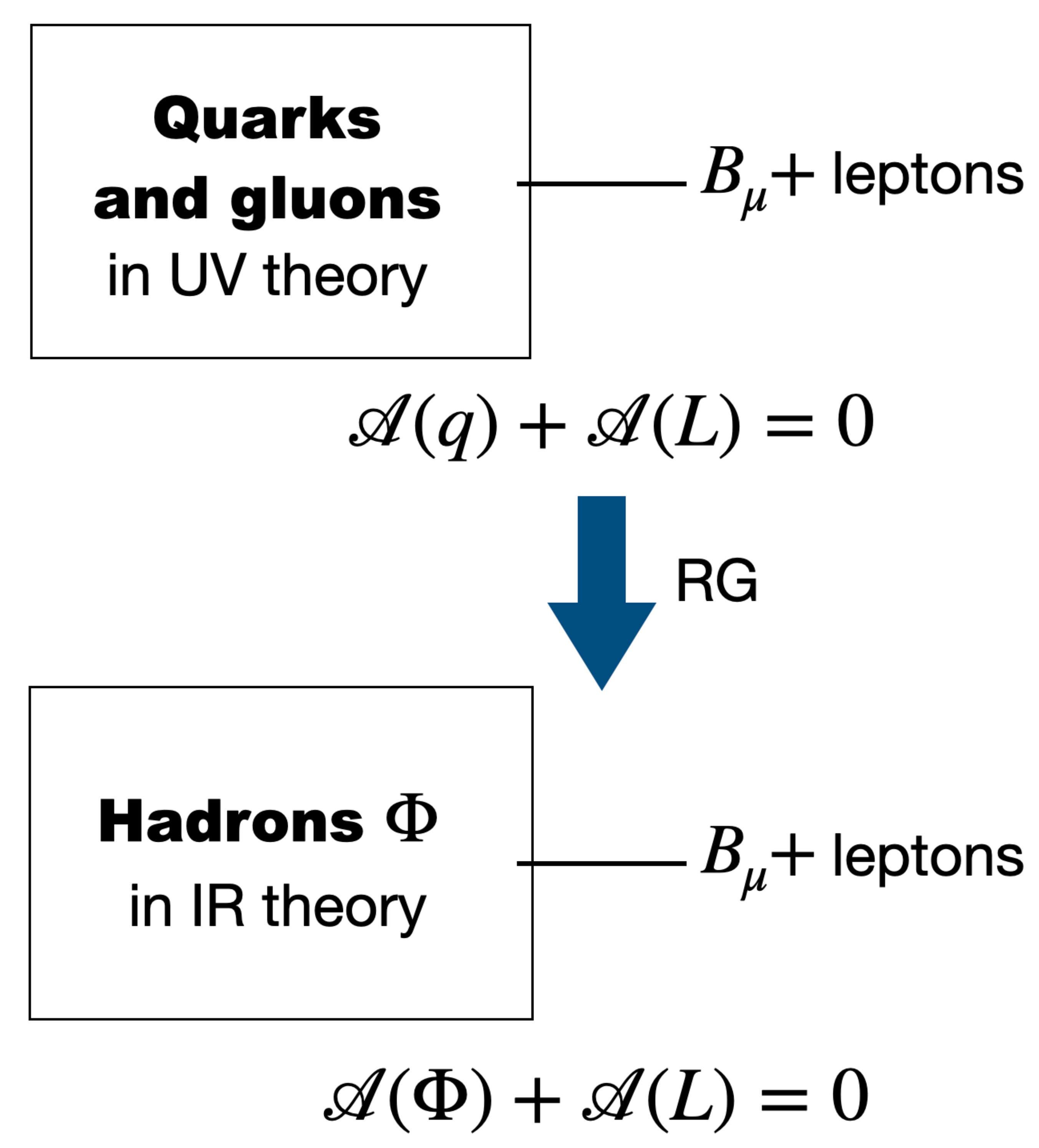}}
\caption{A schematic illustration of the derivation of the 't Hooft AMC, i.e., Eq.~\eqref{AMC_0}. In the UV, the anomalies of quarks are canceled by the color-neutral ``leptons'', allowing the chiral symmetry group $\mathcal{G}[N_f]$ to be gauged by background gauge fields denoted as $B_\mu$. As the theory flows to IR, quarks and gluons confine into color-singlet hadrons while leptons remain in the spectrum. Consequently, the anomalies of the hadrons are canceled by those of the leptons. Comparing the anomaly cancellation conditions in the UV and IR yields Eq.~\eqref{AMC_0}.}
\label{AMC_derivation}
\end{figure}

When the theory flows to an IR-free fixed point with full color screening, there are only hadrons and leptons in the low-energy theory. These particles are all color-singlets. Since the full low-energy theory still has to be gauge invariant under the gauge transformations of the background gauge fields coupled to $\mathcal{G}[N_f]$, the anomalies of hadrons and leptons must cancel each other in the IR, i.e., 
\be
\mathcal{A}(\Phi)+\mathcal{A}(L)=0\;, 
\label{anomaly_cancellation_2}
\ee
where $\mathcal{A}(\Phi)$ denotes the anomaly coefficient of either $[SU(N_f)_{L,R}]^3$ or $[SU(N_f)_{L,R}]^2 U(1)_B$ for all the hadrons $\Phi$ in the spectrum. From Eqs.~\eqref{anomaly_cancellation_1} and~\eqref{anomaly_cancellation_2}, we find
\be
\mathcal{A}(\Phi)=\mathcal{A}(q)\;.
\label{AMC_0}
\ee
This is the 't Hooft AMC~[\citen{tHooft:1979rat}] for the chiral symmetry $\mathcal{G}[N_f]$, where the triangle anomalies of quarks in the UV are matched by those of the hadrons in the IR. See Fig.~\ref{AMC_derivation} illustrating the derivation of Eq.~\eqref{AMC_0}.~\footnote{Furthermore, we refer the reader to~[\citen{Frishman:1980dq, Coleman:1982yg}] for the derivation of AMC from analyticity and unitarity.}

Here is the dynamical implication of Eq.~\eqref{AMC_0}.
\begin{itemlist}
\item When $\chi$SB occurs, the anomalies are matched at IR by the pions through the Wess-Zumino-Witten term~[\citen{Wess:1971yu, Witten:1983tw}]; see also discussions in Refs.~[\citen{Lee:2020ojw, Yonekura:2020upo, Yonekura:2019vyz}] from a modern perspective. 
\item Otherwise, when $\chi$SB does not occur, the anomalies are matched by the massless composite spin-$1/2$ fermions. 
\end{itemlist}
Notice that massless particles that are charged under $G[N_f]$ cannot have spin $>1/2$, as proven by Weinberg and Witten~[\citen{Weinberg:1980kq}]. Furthermore, the assumption of color screening implies that all the hadrons are color singlets, it means that the numbers of the constituent quarks and antiquarks (i.e., $n_q$ and $n_{\bar{q}}$) in each hadron have to satisfy the constraint 
\be
n_q-n_{\bar{q}}=b N_c\;, 
\label{color_screening}
\ee
where $b$ is the baryon number of the hadron being considered. Importantly, $b$ must be an integer for any individual color-singlet hadron, i.e., color screening implies the following condition~\footnote{It is easy to see that any hadron constructed as such must furnish a representation of the $SU(N_c)$ gauge group with $N_c$-ality being zero. Indeed, the $SU(N_c)$ singlet has $N_c$-ality being zero.}
\be
b\in \mathbb{Z}\;.
\label{color_screening_2}
\ee
On the other hand, the hadrons can furnish any representations of $\mathcal{G}[N_f]$ defined in Eq.~\eqref{def:G} as long as the conditions in Eqs.~\eqref{color_screening} and~\eqref{color_screening_2} are satisfied, i.e., we cannot determine whether each hadron in a specific representation forms or not dynamically.~\footnote{See~[\citen{Ciambriello:2022wmh}] for a finer classification of the putative hadrons and e.g.~[\citen{Ciambriello:2024msu}] for numerous examples. Notice that this finer classification is only relevant for the proof of the argument called ``$N_f$ independence''~[\citen{Frishman:1980dq, Farrar:1980sn, Schwimmer:1981yy, Cohen:1981iz, Kaul:1981fd, Takeshita:1981sx}]. Throughout this review, only Eqs.~\eqref{color_screening} and~\eqref{color_screening_2} are relevant.}

More specifically, we are interested in finding solutions for Eq.~\eqref{AMC_0} while assuming ``confinement without $\chi$SB''. Again, the term ``confinement'' refers to the case of screening confinement, as we clarified in Section~\ref{sec:intro}. By evaluating both the anomalies in the UV and IR, Eq.~\eqref{AMC_0} becomes
\begin{equation}
\sum_{r} \,\ell\!\left(r \right) \mathcal{A}\!\left(r\right) = \mathcal{A}\!\left(q\right)\;.
\label{eq:AMC}
\end{equation}
\begin{itemlist}
\item Since both the quark quantum numbers and the spectrum are known, the anomalies in the UV --- the right-hand side of Eq.~\eqref{eq:AMC} --- are fully determined, whose value can always be normalized to a constant which is independent of $N_f$. For instance, let us compute the anomaly coefficient for $q_L$: from the quantum numbers given in Eq.~\eqref{eq:quark_charges}, we find that the anomaly coefficient for the $[SU(N_f)_L]^3$ triangle is $N_c$, while that for the $[SU(N_f)_L]^2 U(1)_B$ triangle is $N_c\cdot \frac{1}{N_c}=1$, where we have normalized the Dynkin index of the fundamental representation of $SU(N_f)_L$ to be one. Similar results apply to $q_R$, where the anomaly coefficients of the $[SU(N_f)_R]^3$ and $[SU(N_f)_R]^2 U(1)_B$ triangles are $N_c$ and $1$, respectively. Notice that the anomaly coefficient of $[U(1)_B]^3$ vanishes, since $q_L$ and $q_R$ have the same charge under $U(1)_B$ whose contributions cancel each other.

More formally, the perturbative anomalies of quarks can be characterized by the six-form anomaly polynomial through the Stora-Zumino descent chain~[\citen{Stora:1983ct, Zumino:1983ew, Manes:1985df}] (see also e.g.~[\citen{Weinberg:1996kr, Cordova:2018cvg, Yonekura:2019vyz, Lee:2020ojw}] for related discussions):
\be
\mathcal{I}_6=\frac{N_c}{3!} \frac{1}{(2\pi)^3}\left(\text{tr} F_L^3- \text{tr} F_R^3\right)\;,
\ee
where $F_{L,R}$ denote the field strength two-forms of the background gauge fields coupled to $q_{L,R}$, respectively.
\item What remains unknown are the quantum numbers and the spectrum of putative hadrons in the IR --- the left-hand side of Eq.~\eqref{eq:AMC}. Without loss of generality, we need to sum over all the possible representations $r$ for the hadrons, and we assign for each representation $r$ with an index $\ell\!\left(r \right)$, which denotes how many times the representation $r$ appears in the spectrum with helicity $+1/2$ minus that of helicity $-1/2$.~\footnote{Notice that any parity-invariant spectrum of putative hadrons can satisfy the matching of $[U(1)_B]^3$ anomaly. Consider e.g. a hadron in a representation $r=(r_L, r_R, b)$ under $\mathcal{G}[N_f]$, the parity conjugate partner is in the representation $r_P=(r_R, r_L, b)$ and their helicities are opposite to each other, i.e., $\ell\!\left(r \right)=-\ell\!\left(r_P \right)$. Therefore, the contribution of the parity conjugate pair to the $[U(1)_B]^3$ anomaly is $\ell\!\left(r \right) b^3 + \ell\!\left(r_P \right) b^3=0$.}
\end{itemlist}

Here are some comments regarding the physical meaning of the indices $\ell\!\left(r \right)$'s:
\begin{itemlist}
 \item For a physical spectrum, all the indices must be integers, i.e., 
 \be
 \ell\!\left(r \right) \in \mathbb{Z} \quad \text{for\ any} \quad r.
 \ee
 Notice that negative integers are also allowed for $\ell\!\left(r \right)$. 
 \item Nontrivial indices, i.e., $|\ell\!\left(r \right)|>1$, imply \emph{emergent accidental symmetries} acting on the hadrons in the IR. This implies that, even though the symmetries in the UV and IR may be different, the 't Hooft anomalies of the UV symmetry must be matched in the IR. 
 \item The index $\ell\!\left(r \right)$ vanishes, i.e., $\ell\!\left(r \right)=0$, when the hadron in representation $r$ is vectorlike, namely the states of hadrons with opposite helicities are paired. It implies that, when all the indices vanish, one can build Dirac mass terms to fully gap the entire hadronic spectrum. A corollary is that 't Hooft AMC implies the theory cannot be trivially gapped when the anomalies in UV do not vanish. Indeed, when the right-hand side of Eq.~\eqref{eq:AMC} is not zero, not all the indices on the left-hand side of Eq.~\eqref{eq:AMC} can vanish, meaning that the unpaired chiral fermions remain massless.
\end{itemlist}
As a result, failure of matching 't Hooft anomalies with integral indices of color-singlet composite fermions suggests that unbroken chiral symmetry is incompatible with color screening condition in Eqs.~\eqref{color_screening} and~\eqref{color_screening_2}. In other words, $\chi$SB necessarily occurs if color screening is assumed; conversely, color charge cannot be fully screened in the unbroken phase of chiral symmetry. The challenge is then to prove that Eq.~\eqref{eq:AMC} does not have integral solutions of all $\ell\!\left(r \right)$ for \emph{any} putative spectrum of color-singlet hadrons in  \emph{any} QCD-like theories with generic $N_c$ and $N_f$. 

\begin{lemma}
When $N_c$ is even, all the color-singlet hadrons in the IR theory are bosons.
\end{lemma}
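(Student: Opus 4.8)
The plan is to reduce the statement to an elementary parity count on the number of fermionic constituents of a color-singlet hadron, combining the color-screening constraint already recorded in Eqs.~\eqref{color_screening} and~\eqref{color_screening_2} with the spin-statistics theorem in $(3+1)$ dimensions. No dynamical input is needed beyond the kinematic fact that in the color-screened phase every hadron is a color-singlet composite.

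First I would recall that, under the color-screening assumption, every putative hadron $\Phi$ in the IR carries $n_q$ quark and $n_{\bar q}$ antiquark constituents; gluons, being bosons, play no role in determining its statistics (so a purely gluonic state, with $n_q=n_{\bar q}=0$, is trivially bosonic). By the spin-statistics connection, $\Phi$ is a boson precisely when its total number of fermionic constituents $n_q+n_{\bar q}$ is even, and a fermion when that number is odd. At the same time, the color-singlet condition is exactly Eq.~\eqref{color_screening}, namely $n_q-n_{\bar q}=b\,N_c$ with $b$ the baryon number of $\Phi$, which by Eq.~\eqref{color_screening_2} is an integer.

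The key step is then the observation
\be
n_q+n_{\bar q}=(n_q-n_{\bar q})+2\,n_{\bar q}=b\,N_c+2\,n_{\bar q}\equiv b\,N_c \pmod 2\;,
\ee
so that the statistics of $\Phi$ depends only on the parity of $b\,N_c$. When $N_c$ is even, $b\,N_c$ is even for every $b\in\mathbb{Z}$; hence $n_q+n_{\bar q}$ is even and $\Phi$ is a boson. Since $\Phi$ was an arbitrary color-singlet hadron, the lemma follows.

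I do not expect a genuine obstacle in this argument; the only points deserving a sentence of care are (i) invoking the spin-statistics theorem in four spacetime dimensions to tie the statistics of the composite to the parity of its constituent-fermion number, and (ii) noting that, although a given hadron may admit several constituent decompositions $(n_q,n_{\bar q})$, all of them realize the same physical baryon number $b$ (it is the $U(1)_B$ charge of $\Phi$) and hence the same parity of $b\,N_c$, so the conclusion is unambiguous. It is also worth remarking that the even-$N_c$ hypothesis is sharp: for odd $N_c$ one can have $b\,N_c$ odd — e.g. the baryon built from $N_c$ quarks is then a fermion — which is precisely the situation feeding the anomaly-matching analysis of Section~\ref{sec:proof_2}.
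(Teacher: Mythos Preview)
Your proposal is correct and follows essentially the same route as the paper: both use the color-screening relation $n_q-n_{\bar q}=bN_c$ to write $n_q+n_{\bar q}=bN_c+2n_{\bar q}\equiv 0\pmod 2$ for even $N_c$, concluding that every color-singlet hadron has an even number of fermionic constituents and is therefore a boson. Your added remarks on spin-statistics, the irrelevance of gluonic constituents, and the decomposition-independence of the parity of $bN_c$ are sound elaborations, but the core argument is identical to the paper's.
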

This is easily seen from the color screening condition in Eq.~\eqref{color_screening}: there are an even number of fermionic constituents in each hadron when $N_c$ is even, i.e.,
\be
n_q+n_{\bar{q}}=b N_c + 2 n_{\bar{q}} =0 \quad\text{mod}\quad 2\;.
\ee
Therefore, there is no putative spectrum of massless composite fermions to match the 't Hooft anomalies in theories with \emph{even} $N_c$, implying that chiral symmetry must be spontaneously broken once color screening is assumed.

As a result, to formulate a general proof, we only need to consider the theories with \emph{odd} $N_c$ and generic $N_f$.

\subsection{Algebraic Proof of Chiral Symmetry Breaking for Special $N_f$}
\label{sec:proof_1}

Among all the QCD-like theories with odd $N_c$, one can formulate an algebraic proof based on 't Hooft AMC showing that, for certain special values of $N_f$, spontaneous $\chi$SB necessarily occurs once color screening is assumed. Specifically, we prove the following result~\footnote{More precisely, what is proven in~[\citen{Ciambriello:2024xzd}] is that the $[SU(N_f)_{L,R}]^2 U(1)_B$ AMC equation admits no integral solutions when $N_f$ is proportional to a nontrivial prime factor of $N_c$, i.e. when $\text{gcd}(N_c, N_f)=p>1$. We see that this is slightly more general than the special $N_f$ considered as in Eq.~\eqref{eq:special_Nf}. See~[\citen{Preskill:1981sr, Weinberg:1996kr}] where a similar result for $N_c=3$ is stated by Preskill and Weinberg, but they did not provide a proof for their statement.}:
\begin{lemma}\label{thm:special_Nf}
Let $N^p_f$ be a nontrivial prime factor of $N_c$, i.e., 
\be
N_c =0 \quad \text{mod} \quad N^p_f\;, 
\label{eq:special_Nf}
\ee
the $[SU(N^p_f)_{L,R}]^2 U(1)_B$ AMC equation admits no integral solutions for the indices of any putative color-singlet hadrons.
\end{lemma}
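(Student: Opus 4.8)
The plan is to reduce the $[SU(N^p_f)_{L,R}]^2 U(1)_B$ anomaly matching equation to a single linear Diophantine constraint on the baryon numbers of the putative hadrons, and then to exploit the arithmetic relation $\gcd(N_c,N^p_f)=N^p_f$ to show this constraint forces a contradiction mod $N^p_f$. Concretely, I would first recall from the discussion around Eq.~\eqref{eq:AMC} that the right-hand side of the $[SU(N^p_f)_L]^2 U(1)_B$ equation is normalized to $1$ (it equals $N_c\cdot\frac{1}{N_c}$). On the left-hand side, a hadron in the representation $r=(r_L,r_R,b)$ contributes $\ell(r)\,\mu(r_L)\,b$, where $\mu(r_L)$ is the Dynkin index of the $SU(N^p_f)_L$ representation $r_L$ (with the fundamental normalized to $1$). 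So the equation reads
\be
\sum_r \ell(r)\,\mu(r_L)\,b = 1\;,
\ee
with the analogous equation for $SU(N^p_f)_R$.

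The key step is to show that every term on the left-hand side is divisible by $N^p_f$, so that the sum is $\equiv 0 \pmod{N^p_f}$, contradicting the right-hand side $=1$ since $N^p_f$ is prime and $>1$. There are two sources of this divisibility that combine: first, the color-screening condition, Eqs.~\eqref{color_screening} and~\eqref{color_screening_2}, ties the baryon number $b$ to the constituent count, $n_q-n_{\bar q}=bN_c$; second, the $SU(N^p_f)$ flavor indices $r_L$ (and $r_R$) of any hadron built from $n_q$ quark-flavor indices and $n_{\bar q}$ antiquark-flavor indices can only be a representation of $SU(N^p_f)$ whose ``$N^p_f$-ality'' (number of boxes minus antiboxes, mod $N^p_f$) equals $n_q-n_{\bar q} \equiv bN_c \pmod{N^p_f}$. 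Since $N^p_f \mid N_c$, this $N^p_f$-ality is $\equiv 0$, so $r_L$ has a number of boxes that is a multiple of $N^p_f$. The remaining arithmetic input is a statement about Dynkin indices: for any irreducible representation of $SU(N^p_f)$ whose Young diagram has a number of boxes divisible by the prime $N^p_f$, the Dynkin index $\mu(r_L)$ is itself divisible by $N^p_f$ — or, more robustly, the product $\mu(r_L)\,b$ is. I would prove this by expressing $\mu(r_L)$ via the standard hook-content / box-counting formula and tracking the valuation at the prime $N^p_f$, using that $N^p_f$ is prime so congruence classes of box positions behave cleanly; alternatively one can argue that the fundamental weight combinations entering $\mu$ group into $N^p_f$ orbits when the box count is a multiple of $N^p_f$. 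Combining: each term $\ell(r)\mu(r_L)b \equiv 0 \pmod{N^p_f}$, hence $\sum_r \ell(r)\mu(r_L)b \equiv 0 \not\equiv 1$, so no integral solution exists.

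The main obstacle I anticipate is the purely number-theoretic lemma about Dynkin indices of $SU(N^p_f)$ representations with box count divisible by $N^p_f$ — in particular making sure the divisibility is by the prime $N^p_f$ and not merely by some smaller quantity, and handling representations with many columns (the conjugate/antisymmetric directions) uniformly. The cleanest route is probably to avoid the Dynkin index entirely and instead use a ``probe subgroup'' argument: restrict the $SU(N^p_f)_L$ background to a $\mathbb{Z}_{N^p_f}$ center-type or a $U(1)$ Cartan direction chosen so that each quark flavor index carries charge $1$; then the $U(1)^2 U(1)_B$ mixed anomaly of a hadron is $\ell(r)\,(\text{sum of }U(1)\text{ charges over its flavor indices, squared-weighted})\,b$, and the sum of charges is $\equiv n_q-n_{\bar q}\equiv 0\pmod{N^p_f}$ by color screening, which pushes the divisibility through without ever invoking the hook formula. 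Matching this against the UV value — which is again an $N^p_f$-independent nonzero constant — yields the contradiction. This is the step where the hypothesis $N^p_f\mid N_c$ is indispensable and where I'd expect to spend most of the effort getting the normalizations exactly right.
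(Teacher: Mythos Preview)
Your proposal has a genuine gap in the very first step: the $[SU(N^p_f)_L]^2 U(1)_B$ anomaly coefficient of a hadron in $r=(r_L,r_R,b)$ is \emph{not} $\mu(r_L)\,b$ but $T(r_L)\,d(r_R)\,b$, where $d(r_R)$ is the dimension of the $SU(N^p_f)_R$ representation. The $r_R$ multiplicity appears because the trace runs over all states in the tensor product $r_L\otimes r_R$. Dropping $d(r_R)$ then forces you into a second error: you claim that the $N^p_f$-ality of $r_L$ equals $n_q-n_{\bar q}\equiv 0\pmod{N^p_f}$, but this is false in general. Left-handed quarks carry $SU(N^p_f)_L$ indices while right-handed quarks carry $SU(N^p_f)_R$ indices, so only the \emph{sum} $\mathcal{N}(r_L)+\mathcal{N}(r_R)\equiv bN_c\equiv 0\pmod{N^p_f}$ is constrained; neither $\mathcal{N}(r_L)$ nor $\mathcal{N}(r_R)$ need vanish individually. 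The paper's own Example (with $N_c=3$, $N^p_f=3$) of $r_2=\left({\tiny\yng(2)},{\tiny\yng(1)},1\right)$ already kills your argument: here $\mathcal{N}(r_L)=2\not\equiv 0\pmod 3$ and $T(r_L)=N^p_f+2=5\not\equiv 0\pmod 3$, so your claimed divisibility of $\mu(r_L)\,b$ fails. What saves the day is precisely the $d(r_R)=N^p_f=3$ factor you omitted.

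The paper's proof exploits both factors via a case split: if $\mathcal{N}(r_L)\equiv 0\pmod{N^p_f}$, one shows $T(r_L)\equiv 0$ using the center generator $T_{\text{center}}=\mathrm{diag}(1,\ldots,1,-(p-1))$ (close to your ``probe subgroup'' idea, but note this $U(1)$ is traceless --- your ``each flavor index carries charge $1$'' is impossible inside $SU(N^p_f)$); if $\mathcal{N}(r_L)\not\equiv 0$, then also $\mathcal{N}(r_R)\not\equiv 0$ (since $N^p_f$ is prime and the sum vanishes), and one shows $d(r_R)\equiv 0$ via the Weyl dimension formula. Your probe-subgroup alternative can be made to work for case~(i), but you still need a separate argument covering case~(ii), and that argument necessarily lives on the $r_R$ side.
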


To demonstrate the Lemma~\ref{thm:special_Nf}, let us consider a general representation 
\be
r=(r_L, r_R, b)
\ee
under the chiral symmetry group $\mathcal{G}[N^p_f]$, which characterizes a color-singlet massless hadron, whose contribution to the $[SU(N^p_f)_{L}]^2 U(1)_B$ anomaly coefficient is given by 
\be
\mathcal{A}(r)=T(r_L) d(r_R) b\;, 
\ee
where $T(r_L)$ denotes the Dynkin index of $r_L$ and $d(r_R)$ denotes the dimensionality of $r_R$. Here, our goal is to show that
\be
T(r_L) d(r_R) b =0 \quad\text{mod} \quad N^p_f \quad\quad \text{for\ \emph{any}}\quad r=(r_L, r_R, b)\;,
\label{eq:prime_factor}
\ee
such that only the \emph{fractional} indices $\ell\!\left(r \right)$ can solve the 't Hooft AMC equation $\sum_{r} \,\ell\!\left(r \right) \mathcal{A}\!\left(r\right)=N_c \times \frac{1}{N_c}=1$. 

Notice that the representation $r$ for hadrons has to satisfy the constraint from Eqs~\eqref{color_screening} and~\eqref{color_screening_2} due to color screening, which in turn implies that the $N_f$-alities of $r_L$ and $r_R$, denoted as $\mathcal{N}(r_L)$ and $\mathcal{N}(r_R)$~\footnote{The $N$-ality of a representation of the $SU(N)$ group counts the number of boxes of the corresponding Young tableau. For instance, $\mathcal{N}(\tiny\yng(1))=1$ and $\mathcal{N}(\bar{\tiny\yng(1)})=N-1$.}, need to satisfy
\be
\mathcal{N}(r_L)+\mathcal{N}(r_R)= n_q \cdot 1 + n_{\bar{q}} \cdot (N_f-1)= b N_c \quad \text{mod} \quad N_f\;.
\ee
As seen in Eq.~\eqref{eq:quark_charges}, each quark transforms as the fundamental representation under either $SU(N_f)_L$ or $SU(N_f)_R$, hence contributing $1$ to the $N_f$-alities; likewise, each antiquark transforms as the anti-fundamental representation under either $SU(N_f)_L$ or $SU(N_f)_R$, whose contribution to $N_f$-alities is $N_f-1$. For $N^p_f$, the above equation is simplified to
\be
\mathcal{N}(r_L)+\mathcal{N}(r_R) = 0  \quad \text{mod} \quad N^p_f\;. 
\label{eq:special_Nf_constraint}
\ee
It turns out that Eq.~\eqref{eq:prime_factor} can be proven by analyzing the individual values of $\mathcal{N}(r_L)$ and $\mathcal{N}(r_R)$, which are subject to the constraint in Eq.~\eqref{eq:special_Nf_constraint} that arises from the special choice of $N^p_f$ as a nontrivial prime factor of $N_c$.
More specifically, the proof of Lemma~\ref{thm:special_Nf} can be structured as follows:
 \begin{romanlist}[(ii)]
 \item \label{case1:prime_Nf} When $\mathcal{N}(r_L) =0$ mod $N^p_f$ (hence $\mathcal{N}(r_R) =0$ mod $N^p_f$), we show that 
 \be
 T(r_L)=0 \quad \text{mod} \quad N^p_f 
 \label{eq:case1:prime_Nf}
 \ee
 using the center generator of $SU(N^p_f)_L$, i.e., $T_{\text{center}}= \text{diag}\left(1,\cdots,1, -(p-1)\right)$. The main idea is to analyze the charge of $r_L$ under the $U(1)$ subgroup of $SU(N^p_f)_L$ generated by $T_{\text{center}}$.  
 \item \label{case2:prime_Nf} When $\mathcal{N}(r_L) \neq 0$ mod $N^p_f$ (hence $\mathcal{N}(r_R) \neq 0$ mod $N^p_f$), we show that 
 \be
 d(r_R)=0 \quad \text{mod} \quad N^p_f 
  \label{eq:case2:prime_Nf}
 \ee
 directly using the Weyl dimension formula (see e.g.~[\citen{Yamatsu:2015npn}]). 
\end{romanlist}
We refer the reader to~[\citen{Ciambriello:2024xzd}] for the detailed proof of~(\ref{case1:prime_Nf}) and~(\ref{case2:prime_Nf}), which we will not repeat here.

In the following, we illustrate the above two cases in Eqs.~\eqref{eq:case1:prime_Nf} and~\eqref{eq:case2:prime_Nf} with a concrete example, which we hope will help the reader build some intuition. (For more examples, see also~[\citen{Weinberg:1996kr, Preskill:1981sr}] and~[\citen{Ciambriello:2024msu}]~\footnote{Since parity is not spontaneously broken in QCD-like theories~[\citen{Vafa:1984xg}], all the examples considered in~[\citen{Ciambriello:2024msu}] consist in a parity-invariant spectrum, i.e. each color-singlet hadron has a partner under parity transformation, where the indices for a parity conjugate pair are equal and opposite. However, as seen in the example~\ref{exp:special_Nf}, the assumption of unbroken parity is not essential for the validity of Eq.~\eqref{eq:prime_factor}.}.)

\begin{example}\label{exp:special_Nf}
Let us consider the QCD-like theory with $N_c=3$ and, for instance, the following baryons in the putative spectrum:
\be
r_1 = \left(\tiny\yng(3)\ , s\ , 1\right)\;, \quad\quad\quad r_2=\left(\tiny\yng(2)\ , \tiny\yng(1)\ , 1\right)\; ,
\label{eq:spectrum_example1}
\ee
where each baryon contains three constituent quarks, hence the baryon number is $b=1$, and $s$ denotes the singlet representation. (Notice that $r_{1,2}$ cannot describe color-singlet hadrons when $N_c\neq 3$.)
The Dynkin indices and dimensions of the relevant representation of $SU(N_f)_{L,R}$ are~[\citen{Ciambriello:2024msu}] 
\be
T({\tiny\yng(3)}) = \frac{(N_f+3) (N_f+2)}{2}\;, \quad d(s)=1\;, \quad T({\tiny\yng(2)}) = N_f+2\;, \quad d({\tiny\yng(1)}) =N_f \;.
\label{eq:example_dynkin_dimension}
\ee
It is straightforward to observe that when $N^p_f=3$, 
\be
T({\tiny\yng(3)}) =0 \quad \text{mod} \quad 3\;, \quad\quad\quad d({\tiny\yng(1)}) =0 \quad \text{mod} \quad 3\;,
\ee
hence both $A(r_1)$ and $A(r_2)$ satisfy Eq.~\eqref{eq:prime_factor}, which match the two cases~(\ref{case1:prime_Nf}) and~(\ref{case2:prime_Nf}), respectively. 
\end{example}

We remark that, for any theory with specific $N_c$ and $N^p_f$ satisfying Eq.~\eqref{eq:special_Nf}, the result in Lemma~\ref{thm:special_Nf} may seem obvious and well-known for a specific putative spectrum, for instance as in Eq.~\eqref{eq:spectrum_example1}; however, the novelty of~[\citen{Ciambriello:2024xzd}] is that it provides a proof applicable to a general spectrum of \emph{any} putative massless composite fermions, as long as they are color-singlet hadrons. 
Since the dynamics of QCD-like theories cannot be resolved, it is necessary to consider all the possible color-singlet particles in order to have a rigorous proof.

\section{Persistent Mass Conditions}
\label{sec:pmc}
So far, we have shown that, for special values of $N_c$ and $N_f$ --- namely, when $N_c$ is even, or $N_c$ is odd and $N_f$ is a prime factor of $N_c$ --- AMC implies that unbroken chiral symmetry is incompatible with a fully color-screened, IR-free phase in a class of QCD-like theories. However, for other general values of $N_c$ and $N_f$, AMC alone is not sufficiently restrictive to determine the IR dynamics. In such cases, integral solutions to the AMC equations can be found~[\citen{Preskill:1981sr}], AMC alone is not conclusive. 

A natural question arises: can we identify additional constraints that, when combined with the AMC, allow for a definitive conclusion in QCD-like theories regarding the compatibility between unbroken chiral symmetry and a free hadronic phase with full color screening? 

The answer is provided by the so-called persistent mass conditions (PMC)~[\citen{tHooft:1979rat, Preskill:1981sr, Vafa:1983tf}]. 
Intuitively, the basic idea is to deform the massless theory by introducing finite quark masses while tracking the unbroken chiral symmetry along the deformation. This can be a powerful probe for the QCD-like theories, where gauge-invariant quark mass terms are allowed due to their vectorlike nature. PMC says that bound states with massive constituents must also be massive~\footnote{As we will see, such a statement needs a minor refinement: the bound states must also be charged under the $U(1)_{H_i}$ factors in the unbroken chiral symmetry; see e.g. after Eq~\eqref{eq:PMC_G1}.}, which was originally formulated by 't Hooft as the decoupling condition~[\citen{tHooft:1979rat}], later on reformulated by Preskill and Weinberg as PMC~[\citen{Preskill:1981sr}] such that potential phase transitions at large quark masses cannot invalidate these conditions, and finally proven by Vafa and Witten~[\citen{Vafa:1983tf}] with some mild assumptions (on continuity of parameters, as we will illustrate near the end of Section~\ref{sec:vw}). 

Some comments are as follows. 
\begin{itemlist}
\item PMC also implies that the vectorlike part of $\mathcal{G}[N_f]$ (see Eq.~\eqref{def:G}) cannot be spontaneously broken, a result commonly referred to as the Vafa-Witten theorem~[\citen{Vafa:1983tf}]. For instance, one can consider the limit that all the quark flavors have a common positive and real mass~\footnote{By definition, vectorlike symmetries are the ones preserved by quark mass terms. Hence, the maximal vectorlike symmetry is obtained in the limit where all the quark flavors are assigned a common mass.}, the vectorlike symmetry $SU(N_f)_V\times U(1)_B$ cannot be spontaneously broken; otherwise there are \emph{massless} Nambu-Goldstone bosons made from massive quarks, which contradicts the PMC.
\item Similar to Vafa-Witten's derivation on PMC, there are inequalities on hadron masses originally derived by Weingarten~[\citen{Weingarten:1983uj}]; see also~[\citen{Nussinov:1999sx}] for a comprehensive review. Those mass inequalities imply that, in QCD-like theories, the lightest hadrons in the spectrum are the mesons interpolated by the pseudo-density operator $\bar{q}\gamma^5 q$, i.e., 
\be
\langle 0 | \bar{q}\gamma^5 q | \pi \rangle \neq 0\;. 
\ee
If there are massless fermionic hadrons needed for AMC, there are also massless mesons, denoted as $\pi$, in the putative spectrum.

However, such an argument per se does not conclusively prove $\chi$SB, since it does not contradict the assumption of unbroken chiral symmetry. Indeed, when chiral symmetry is unbroken the axial current $J^\mu_A=\bar{q} \gamma^5\gamma^\mu T^a q$ and the pseudo-density operator $\bar{q}\gamma^5 q$ transform differently under $SU(N_f)_L\times SU(N_f)_R$~\footnote{More specifically, if the chiral symmetry group $SU(N_f)_L\times SU(N_f)_R\times U(1)_B$ is not spontaneously broken, $J^\mu_A=\bar{q} \gamma^5\gamma^\mu T^a q=\bar{q}_L \gamma^5\gamma^\mu T^a q_L+ \bar{q}_R \gamma^5\gamma^\mu T^a q_R$ transforms as the representation $(\text{Adj}, s, 0)+ (s, \text{Adj}, 0)$, while $\bar{q}\gamma^5 q=\bar{q}_L\gamma^5 q_R + \bar{q}_R\gamma^5 q_L$ transforms as $(\bar{\tiny\yng(1)}, \tiny\yng(1),0) + (\tiny\yng(1), \bar{\tiny\yng(1)}, 0)$. Here, $s$ stands for the singlet representation. Therefore, as we see explicitly, $\bar{q} \gamma^5\gamma^\mu T^a q$ transforms differently from $\bar{q}\gamma^5 q$ when chiral symmetry is unbroken.}, hence 
\be
\langle 0 | J^\mu_A | \pi \rangle = 0\;,
\ee
i.e., in this case the $\pi$ particles are \emph{not} Nambu-Goldstone bosons. (In contrast, if the $\pi$ particles were Nambu-Goldstone bosons in the broken phase of chiral symmetry, $\langle 0 | J^\mu_A | \pi \rangle \propto p^\mu$, where $p^\mu$ are the four-momentum of the $\pi$ particles.) Notice that, while QCD inequalities do not conclusively rule out the unbroken phase of chiral symmetry, they rule out the so-called Stern phase of the broken chiral symmetry~[\citen{Kogan:1998zc}].
\end{itemlist}

In the remainder of this section, we first briefly review Vafa and Witten's derivation of PMC, and then illustrate the algebraic structure (involving PMC equations with more than one massive flavor) that is crucial for our proof.
In particular, many results in Section~\ref{sec:algebra_PMC} were recently discovered in~[\citen{Ciambriello:2022wmh}]. 

\subsection{Derivation from Vafa and Witten}
\label{sec:vw}

There are two essential ingredients in the derivation of the PMC by Vafa and Witten in QCD-like theories~[\citen{Vafa:1983tf}]:
\begin{romanlist}
\item \emph{Positivity of the measure.} In vectorlike gauge theories with real and positive quark masses, the fermionic integration measure in the path integral is positive-definite. Consider the fermion partition function in a fixed background gauge field $A_\mu$:
\be
\int \mathcal{D}\bar q \ \mathcal{D}q \ e^{-\int d^4x \ \bar q (\slashed{D}+m) q } = \text{det}(\slashed{D}+m) \;.
\ee
In Euclidean space, the Dirac operator is anti-hermitian, i.e. it satisfies $\slashed{D}^\dagger=-\slashed{D}$, so its eigenvalues are purely imaginary:
\be
\slashed{D} \psi_\lambda = i\lambda \psi_\lambda \; .
\ee
Furthermore, since $\slashed{D}$ anti-commute with $\gamma_5$, i.e., $\{\slashed{D}, \gamma_5\}=0$, which implies that non-zero eigenvalues are paired: if $\lambda$ is an eigenvalue, $-\lambda$ is also, i.e., $\slashed{D} (\gamma_5\psi_\lambda) = -i\lambda (\gamma_5\psi_\lambda)$. The fermion determinant then becomes
\be
\text{det}(\slashed{D}+m)= \prod_{\lambda} (-i\lambda+m) = \prod_{\lambda>0} (\lambda^2+m^2)\cdot m^{n_0}>0 \;,
\ee
where $n_0$ is the number of the zero modes of the Dirac operator. Therefore, the fermionic measure is strictly positive for all $A_\mu$ provided $m>0$.

\item \emph{Exponentially-decaying upper bound on the quark propagator.} In a fixed gauge field background $A_\mu$, the quark propagator is given by 
\be
S^A_\Delta (x,y;m) \equiv (\slashed{D}+m)^{-1}_{x,y} \;.
\ee
Vafa and Witten showed that this propagator satisfies an exponential upper bound~\footnote{This is the Eq.~(25) of the original paper~[\citen{Vafa:1983tf}]. We refer the reader to the original paper for the derivation of this equation.}:
\be
| S^A_\Delta (x,y;m) |  \leq \alpha(\Delta, m) \ e^{-m |x-y|}\;,
\ee
where the subscript $\Delta$ denotes the smeared regions near the points $x$ and $y$, which can intuitively be thought of as a UV cutoff regularizing the short-distance behavior of the propagator. Here, what is crucial for us is that $\alpha(\Delta, m)$ is \emph{independent} of the background gauge field $A$, but only depends on the smearing scale $\Delta$ and the quark mass $m$. 
\end{romanlist}
Positivity of the fermionic measure ensures that the exponential bound survives after averaging over all gauge configurations. That is, the full gauge-averaged propagator in the theory with $N_f$ flavors
\be
\langle q(x) \bar q (y)\rangle=\frac{1}{Z} \int \mathcal{D} A_\mu \left[\text{det} (\slashed{D} +m)\right]^{N_f} e^{-S_{\text{YM}}[A]} S^A_\Delta (x,y;m)\;
\ee
satisfies the bound:
\be
|\langle q(x) \bar q (y)\rangle| \leq \alpha(\Delta, m) \ e^{-m |x-y|} \;.
\ee

More generally, one can derive a similar bound on the two-point function of the composite local operator $T(x)$, constructed from quark fields, that interpolates color-singlet hadrons. For instance, when all the flavors of quarks have an equal mass $m>0$, it follows that 
\be
|\langle T(x) T^\dagger (y)\rangle| \leq C(\Delta, m) \ e^{-m\cdot n |x-y|} \; ,
\label{eq:exp_decay}
\ee
where $n$ is the total number of quark and anti-quark propagators from the point $x$ to $y$. From another perspective, the composite operator $T(x)$ can interpolate a whole tower of color-singlet hadrons in the color-screening, IR-free phase with the same quantum number under the chiral symmetry. (Notice that so far we have not ruled out the option of unbroken chiral symmetry.) \emph{At large distances}, the two-function of $T(x)$ is dominated by the lightest hadron being exchanged in that channel, i.e.,
\be
\langle T(x) T^\dagger (y)\rangle \sim e^{-m_{\text{hadron}} |x-y|}\; ,
\label{eq:exp_decay_large_distance}
\ee
where all the other hadrons with larger masses contribute subleading terms that decay more rapidly and are exponentially suppressed.~\footnote{Here, all the hadrons interpolated by the composite local operator $T(x)$ are massive. If there were massless hadrons being excited in the same channel, the two-point function would decay with a \emph{power law} rather than an exponential law. If $\langle T(x) T^\dagger (y)\rangle$ obeys the power-law decay, it cannot satisfy the bound in Eq.~\eqref{eq:exp_decay} at large distances as $|x-y|\to \infty$.}
As a result, we obtain the following inequality by comparing Eqs.~\eqref{eq:exp_decay} and~\eqref{eq:exp_decay_large_distance}, which explicitly realizes the statement of PMC:
\be
m_{\text{hadron}} \geq m \cdot n >0 \;.
\ee

Following~[\citen{Vafa:1983tf}], one can also derive similar mass inequalities when assigning different masses to different quark flavors. For instance, let us consider the case where one quark flavor is assigned mass $m_1$, while the remaining flavors are assigned a common mass $m_2$. In this case, one can derive the following inequality for PMC
\be
m_{\text{hadron}} \geq m_1 \cdot n_1 + m_2 \cdot n_2 >0\;,
\ee 
where $n_{1,2}$ are the numbers of quark and anti-quark propagators from the point $x$ to $y$ associated with the quark flavors of the masses $m_{1,2}$, respectively. Notice that such a bound is valid for any finite UV cutoff associated with $\Delta$ and any positive quark masses $m_{1,2}$. (For instance, the UV cutoff can be extremely large as long as it remains finite, and $m_{1,2}$ can be infinitesimally small, such that the bound cannot be invalidated by any phase transitions at a small UV cutoff or large quark masses as compared to any dynamical scale.) 
\emph{Assuming} that the physical quantities should vary continuously with these parameters, the bound is expected to hold in the limits of $\Delta\to \infty$ (i.e., sending the UV cutoff to infinity), or $m_{1,2}\to 0$ (i.e., taking the massless quark limit).
Notice that, provided $n_{1,2}>0$, the lower bound for $m_{\text{hadron}}$ is strictly positive as long as $m_1$ and $m_2$ are not sent to zero simultaneously. For instance, assuming that $m_{\text{hadron}}$ is a continuous function of $m_2$, we have the PMC 
\be
m_{\text{hadron}} \geq m_1 \cdot n_1>0
\ee
in the limit $m_2\to 0$.

\subsection{Algebraic Structure of Persistent Mass Conditions}
\label{sec:algebra_PMC}
So far, we have established the inequalities for PMC following~[\citen{Vafa:1983tf}]. To apply these inequalities in our proof, we need to translate them into a system of equations involving the indices $\ell\!\left(r \right)$ associated wth the massless hadrons in the putative spectrum, which are required to match the 't Hooft anomalies. The basic idea is then to track the unbroken chiral symmetry within $\mathcal{G}[N_f]$ (see in Eq.~\eqref{def:G}) while turning on real and positive quark masses~[\citen{Ciambriello:2022wmh}]. (Throughout this section, all the quark masses are considered to be real and positive. In the following, we will not explicitly mention this point for brevity.)

For instance, we can turn on a mass $m_1$ for one quark flavor and a common mass $\epsilon$ for the other flavors. In the limit $\epsilon\to 0$, the chiral symmetry group $\mathcal{G}[N_f]$ is explicitly broken to  
\be
\mathcal{G}[N_f,1]=SU(N_f-1)_L \times SU(N_f-1)_R \times U(1)_B \times U(1)_{H_1}\, ,
\label{def:G_1}
\ee
where only the massive quark flavor with the mass $m_1$ is charged under $U(1)_{H_1}$. In the color-screened and IR-free phase, the composite operators $T(x)$ interpolate color-singlet hadrons which furnish various representations of $\mathcal{G}[N_f,1]$. Following the discussion in~\ref{sec:vw}, one can show a strictly positive lower bound on the masses of the hadrons with the $U(1)_{H_1}$ charge $H_1\neq 0$:
\be
m_{\text{hadrons}} \geq |H_1| m_1 >0\;.
\label{eq:PMC_G1}
\ee
Notice that it was crucial to assume $H_1\neq 0$ here, which ensures that the hadron contains (at least) $|H_1|$ quark (or anti-quark) propagators with the mass $m_1$ propagating from the point $x$ to $y$. In contrast, for the hadrons with $H_1=0$, no firm bound can be established on their masses in the limit $\epsilon\to 0$~[\citen{Ciambriello:2022wmh}]. Eq.~\eqref{eq:PMC_G1} implies that the index associated with each hadron transforming in a representation $r_1$ of $\mathcal{G}[N_f,1]$ with $H_1\neq 0$ must vanish, i.e., 
\be
0=\ell\!\left(r_1 \right) = \sum_{r} \ell\!\left(r \right) \kappa (r\to r_1)  \;.
\label{eq:PMC_G1_indices}
\ee
Here, the second equality follows from the decomposition of the representations: each representation $r_1$ of $\mathcal{G}[N_f,1]$ can appear in the decomposition of several representations $r$ of the full chiral symmetry group $\mathcal{G}[N_f]$, and accordingly, the indices $\ell\!\left(r_1 \right)$ can be computed from $\ell\!\left(r \right)$, where $\kappa (r\to r_1)$ counts how many times $r_1$ appears in the decomposition of $r$. We denote all the equations in form of Eq.~\eqref{eq:PMC_G1_indices} as PMC$[N_f,1]$,  where $N_f$ and $1$ denote the total number of flavors and that of the massive flavor with mass $m_1$, matching the notation of the unbroken chiral symmetry group as in Eq.~\eqref{def:G_1}.

Likewise, we can deform the massless theory by turning on quark masses $m_1$ and $m_2$ for the first two flavors and a common mass $\epsilon$ for the other flavors. When $m_1\neq m_2$ and taking the limit $\epsilon\to 0$, the unbroken chiral symmetry group is 
\be
\mathcal{G}[N_f,2]=SU(N_f-2)_L \times SU(N_f-2)_R \times U(1)_B \times U(1)_{H_1} \times U(1)_{H_2} \, ,
\label{def:G_2}
\ee
where the quark flavors with masses $m_{1,2}$ are charged under $U(1)_{H_1, H_2}$, respectively. 
For some reason that will become clear later (see Eq.~\eqref{eq:pmc_identification}), let us consider color-singlet hadrons with $H_1=0$ and $H_2\neq 0$ at the composite level, for which we establish the strictly positive lower bound on the masses in the limit $\epsilon\to 0$: 
\be
m_{\text{hadrons}} \geq |H_2| m_2 >0\;.
\label{eq:PMC_G2}
\ee
Again, no firm bound can be established in the limit $\epsilon\to 0$ for the hadrons with $H_1=H_2=0$. Eq.~\eqref{eq:PMC_G2} implies PMC$[N_f,2]$ for each hadron transforming in a representation $r_2$ of $\mathcal{G}[N_f,2]$ with $H_1=0$ and $H_2\neq 0$, i.e., 
\begin{align}
0=\ell\!\left(r_2 \right) &= \sum_{\hat{r}_1} \ell\!\left(\hat{r}_1 \right) \kappa (\hat{r}_1\to r_2) \nonumber\\
&= \sum_{\hat{r}_1} \left(\sum_{r} \ell\!\left(r \right) \kappa (r\to \hat{r}_1) \right) \kappa (\hat{r}_1\to r_2)  \;, 
\label{eq:PMC_G2_indices}
\end{align}
where $\hat{r}_1$ is a representation of $\mathcal{G}[N_f,1]$ with $H_1=0$, whose indices in turn can be computed from those of $r\in \mathcal{G}[N_f]$. Notice that $\ell\!\left(\hat{r}_1 \right)$ are \emph{not} sent to zero by PMC$[N_f,1]$, since $\hat{r}_1$ has the quantum number $H_1=0$.

\begin{figure}[t]
\centerline{\includegraphics[width=12cm]{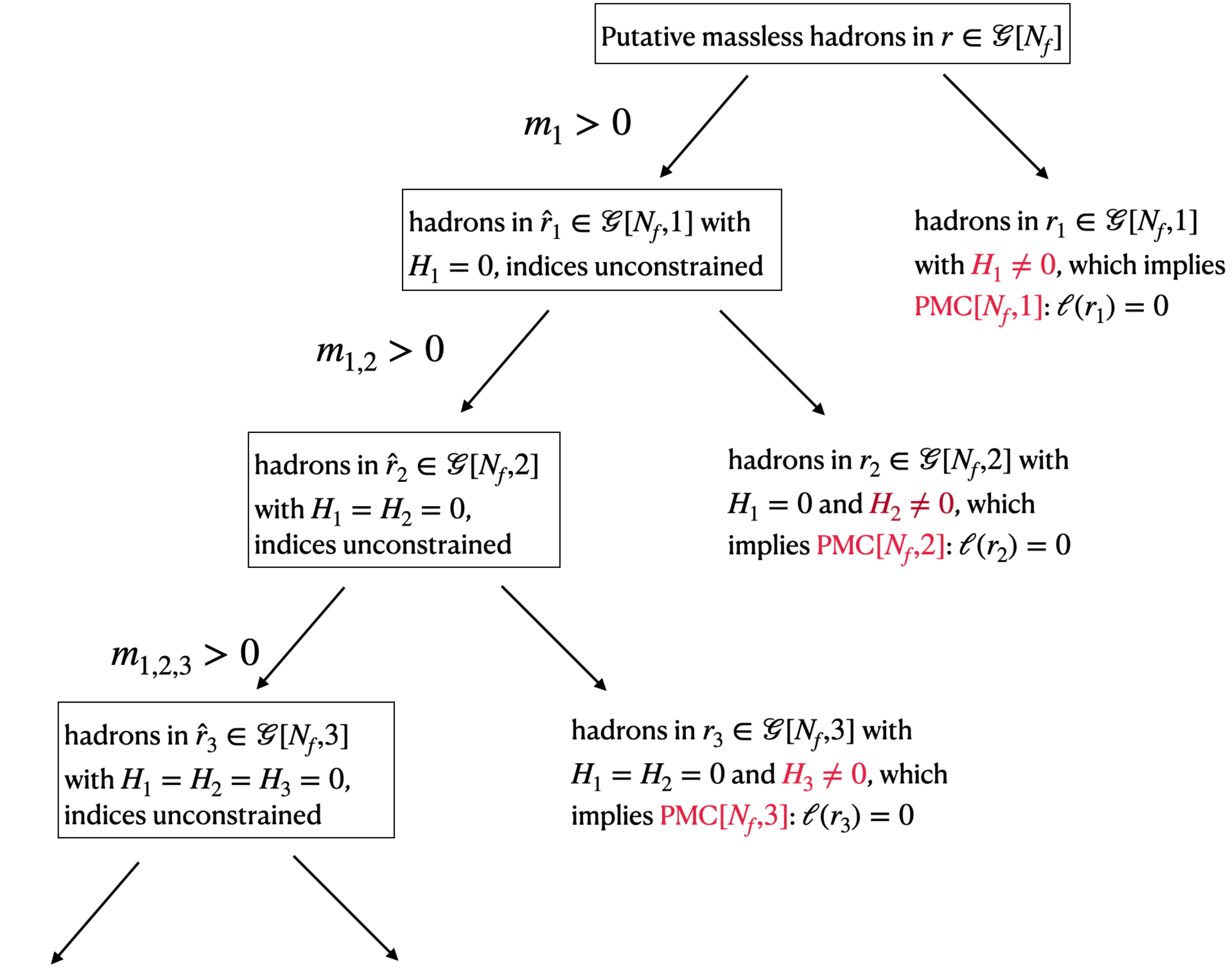}}
\caption{A schematic illustration of the derivation of various PMC$[N_f,h]$ equations from a chain of decomposition; see e.g. Eqs.~\eqref{eq:PMC_G1_indices},~\eqref{eq:PMC_G2_indices} and~\eqref{eq:PMC_Gh_indices}. The key idea is to deform the QCD-like theory of massless quarks by introducing real and positive quark masses, while carefully tracking the unbroken chiral symmetry group, in particular the associated nonzero $U(1)_{H_i}$ charges. All these mass deformations can be viewed as theoretical probes which, together with 't Hooft AMC, are useful for constraining the dynamics of the QCD-like theory.}
\label{PMC_decomposition}
\end{figure}

More generally, one can turn on different masses for the first $h$ quark flavors, $m_{1, 2, \cdots, h}$, and a common mass $\epsilon$ for the remaining flavors. In the limit $\epsilon\to 0$, the unbroken chiral symmetry group is
\be
\mathcal{G}[N_f,h]=SU(N_f-h)_L \times SU(N_f-h)_R \times U(1)_B \times U(1)_{H_1} \times \cdots \times U(1)_{H_h} \, ,
\label{def:G_h}
\ee
where the chiral part $SU(N_f-h)_L \times SU(N_f-h)_R$ remains nontrivial when $h\leq N_f-2$. Here, the quark flavor with mass $m_i$ is charged nontrivially under the $U(1)_{H_i}$ factor.
For the hadrons which are neutral under all the $U(1)_{H_i}$ factors with $i<h$ but charged nontrivially under $U(1)_{H_h}$, we find the strictly positive lower bound on the masses in the limit $\epsilon \to 0$:
\be
m_{\text{hadrons}} \geq |H_h| m_h >0\;,
\label{eq:PMC_Gh}
\ee
which implies PMC$[N_f, h]$ for each hadron transforming in a representation $r_h$ of $\mathcal{G}[N_f, h]$ with $H_i=0$ for all $i<h$ and $H_h\neq 0$. The indices $\ell\!\left(r_h \right)$ must vanish due to PMC$[N_f, h]$, and they can be computed from decomposing $r\in \mathcal{G}[N_f]$ in steps:
\begin{align}
0=\ell\!\left(r_h \right) &= \sum_{\hat{r}_{h-1}} \ell\!\left(\hat{r}_{h-1} \right) \kappa (\hat{r}_{h-1}\to r_h) \nonumber\\
&= \sum_{\hat{r}_{h-1}} \left(\sum_{\hat{r}_{h-2}} \ell\!\left(\hat{r}_{h-2} \right) \kappa (\hat{r}_{h-2}\to \hat{r}_{h-1}) \right) \kappa (\hat{r}_{h-1}\to r_h)  \nonumber\\
&=\cdots \;, 
\label{eq:PMC_Gh_indices}
\end{align}
where $\hat{r}_i$ is a representation of $\mathcal{G}[N_f,i]$ with vanishing $U(1)_{H_1} \times\cdots\times U(1)_{H_i}$ charges, whose indices are not sent to zero by the PMC in the previous steps of decompositions, but calculable from decomposing $r$. 

In Fig.~\ref{PMC_decomposition}, we summarize the whole process of decomposition that was used to derive all the PMC$[N_f,h]$ equations. 

\begin{figure}[t]
\centerline{\includegraphics[width=12cm]{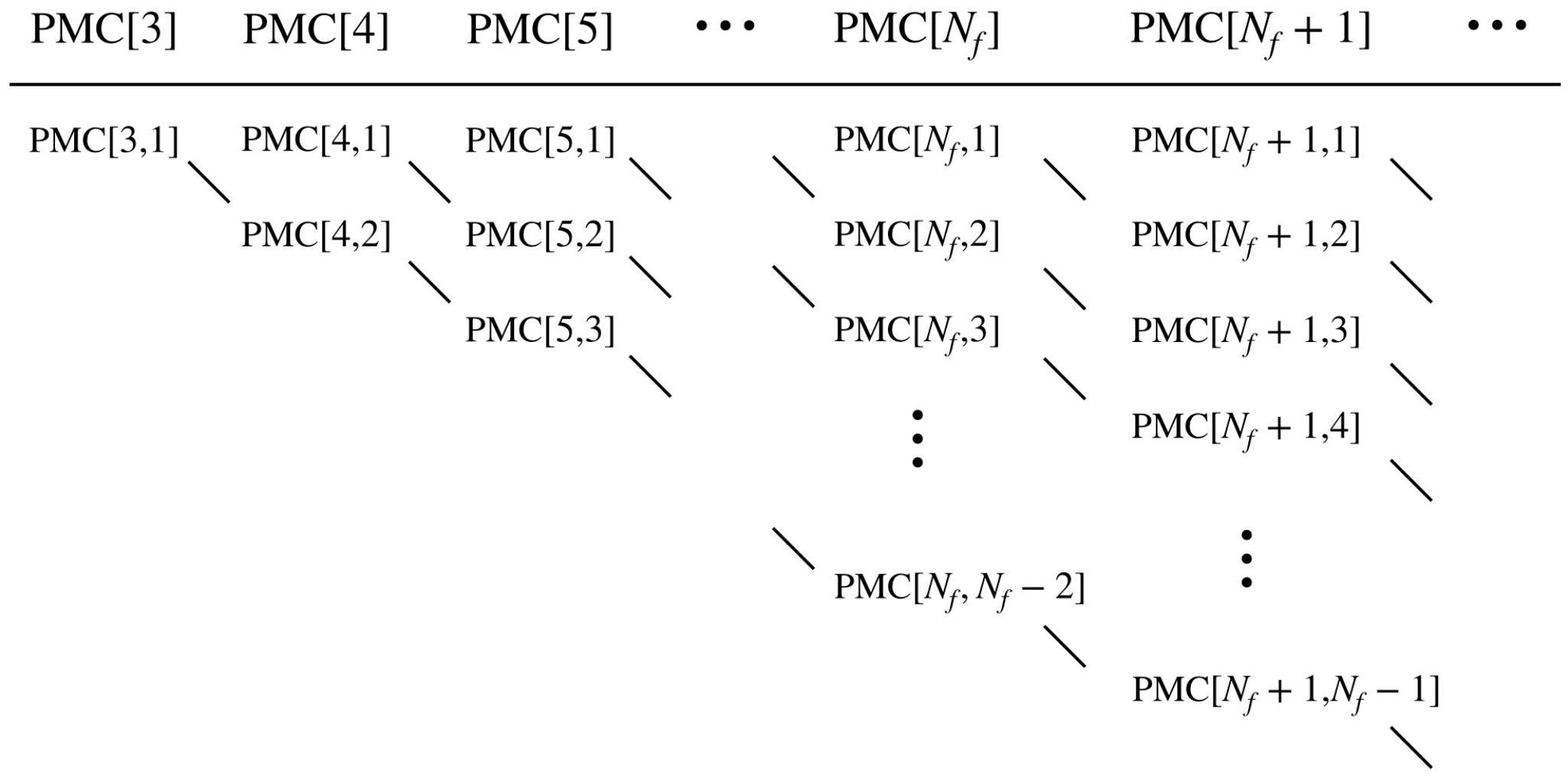}}
\caption{A coherent view of PMC equations for QCD-like theories with the same $N_c$ but different $N_f$. The PMC equations connected by a diagonal line have the same algebraic structure and can be identified. This arises from the fact that any representation $\hat{r}_1\in \mathcal{G}[N_f,1]$ with $H_1=0$ can effectively be viewed as a representation of $\mathcal{G}[N_f-1]$. For $2\leq h\leq N_f-2$, decomposing such a representation produces all PMC$[N_f, h]$ in the former theory of $N_f$ flavors and all PMC$[N_f-1, h-1]$ in the latter theory of $N_f-1$ flavors. Figure taken from~[\citen{Ciambriello:2022wmh}].}
\label{PMC_coherent}
\end{figure}

Having established all the PMC$[N_f, h]$ with $1\leq h\leq N_f-2$ for the QCD-like theory with $N_f$ flavors of massless quarks, it is useful to place them in a column as shown in Fig.~\eqref{PMC_coherent}. All the PMC equations in the same column are collectively denoted as PMC$[N_f]$.
One can repeat the same construction of PMC for the QCD-like theories with other values of $N_f$ (but with the same $N_c$). Here, the crucial observation is that any two sets of PMC equations connected by a diagonal line in Fig.~\eqref{PMC_coherent} have the same structure and thus can be identified~[\citen{Ciambriello:2022wmh}], i.e.,
\be
\text{PMC}[N_f, h] \sim \text{PMC}[N_f-1,h-1]\;, \quad \text{where} \quad 2\leq h\leq N_f-2\;.
\label{eq:pmc_identification}
\ee
This is because each $\hat{r}_1\in \mathcal{G}[N_f,1]$ with $H_1=0$ can effectively be viewed as a representation $r^\prime \in \mathcal{G}[N_f-1]$, i.e., 
\be
\hat{r}_1  \sim r^\prime \;,
\label{eq:pmc_rep_identification}
\ee
where the latter characterizes a putative color-singlet hadron in the QCD-like theory with the same $N_c$ but $N_f-1$ flavors of massless quarks~\footnote{Intuitively, one can compare $\mathcal{G}[N_f,1]=SU(N_f-1)_L \times SU(N_f-1)_R \times U(1)_B \times U(1)_{H_1}$ with $\mathcal{G}[N_f-1]=SU(N_f-1)_L \times SU(N_f-1)_R \times U(1)_B$, the difference is only the additional $U(1)_{H_1}$ factor for $\mathcal{G}[N_f,1]$. Since $\hat{r}_1$ is uncharged under this $U(1)_{H_1}$, it transforms in the same way under $\mathcal{G}[N_f,1]$ and $\mathcal{G}[N_f-1]$.}. Following the previous discussion in Fig.~\ref{PMC_decomposition}, decomposing the representation $\hat{r}_1$ produces all PMC$[N_f, h]$ with $h\geq 2$ in the theory with $N_f$ flavors, and decomposing $r^\prime$ produces all PMC$[N_f-1, h-1]$ in the theory with $N_f-1$ flavors, where $1\leq h-1\leq N_f-3$.

\section{Algebraic Proof of Chiral Symmetry Breaking for General $N_f$}
\label{sec:proof_2}

\subsection{Downlifting}
\label{sec:proof_2_downlifting}

In this section, we present a proof for Theorem~\ref{thm:main} by combining the arguments reviewed in Sections~\ref{sec:proof_1} and~\ref{sec:algebra_PMC}. The reader is encouraged to keep the results of those sections in mind as they proceed.

\begin{proof} 
The main strategy is referred to as ``\emph{Downlifting}''~[\citen{Ciambriello:2024xzd}], where the whole procedure is outlined as follows (see also summary in Fig.~\ref{downlifting}):
\begin{romanlist}[(ii)]
\item \label{downlift_step1} Suppose chiral symmetry is not spontaneously broken for the QCD-like theory with $N_c$ and $N_f$ while the free hadronic description with full color screening applies. This assumption implies that there must exist integral solutions of all AMC and PMC equations for the indices of putative color-singlet hadrons.
\item \label{downlift_step2} Given such an integral solution of AMC and PMC for $N_f$ flavors, one can construct an integral solution, denoted as the ``\emph{downlifted}'' solution, for the AMC and PMC equations for $N_f-1$ flavors (while keeping $N_c$ fixed). 

By iterating this procedure, we eventually obtain an integral solution of the AMC and PMC equations for $(N_f^p)_\text{min}$ flavors, where $(N_f^p)_\text{min}$ denotes the minimal nontrivial prime factor of $N_c$; see Eq.~\eqref{eq:special_Nf}.
\item \label{downlift_step3} However, by the Lemma~\ref{thm:special_Nf}, we know that no integral solution exists for the $[SU((N^p_f)_\text{min})_{L,R}]^2 U(1)_B$ AMC equation. This contradicts the initial assumption made in the step~(\ref{downlift_step1}). As a result, either the free hadronic description with full color-screening breaks down, or spontaneous $\chi$SB must occur in the QCD-like theory with $N_c$ colors and $N_f$ flavors.
\end{romanlist}
\end{proof}

\begin{figure}[t]
\centerline{\includegraphics[width=12cm]{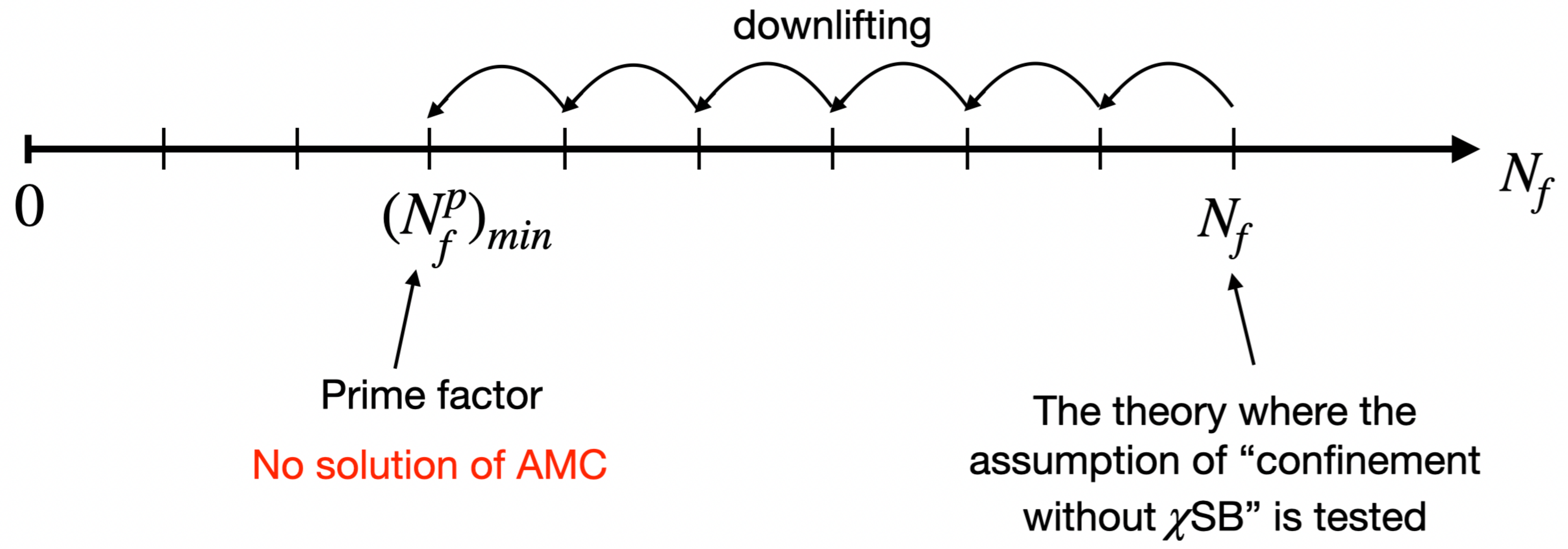}}
\caption{Illustration of the proof of Theorem~\ref{thm:main} based on the strategy of ``\emph{downlifting}''. The assumption of ``confinement without $\chi$SB'' in the QCD-like theory with $N_c$ colors and $N_f$ flavors implies that there exist integral solutions for AMC and PMC. Here, the term ``confinement'' refers to the screening confinement, as we explained in the Section~\ref{sec:intro}. Downlifting implies that there also exist integral solutions for AMC and PMC equations with $N_c$ colors and $N_f^\prime$ flavors, where $N_f^\prime<N_f$ which also includes $(N_f^p)_\text{min}$. However, there exists no integral solution for AMC with $N_c$ colors and $(N_f^p)_\text{min}$ flavors, falsifying the initial assumption. Figure adapted from~[\citen{Ciambriello:2024xzd}].}
\label{downlifting}
\end{figure}

As we have seen, the strategy of downlifting combines arguments by induction and contradiction. To finally complete the proof, it remains to rigorously justify the step~(\ref{downlift_step2}). For this purpose, we demonstrate the following:
\begin{lemma}\label{thm:downlifting}
Let the set of indices $\{\ell\!\left(r \right)\}$ be a solution of AMC and PMC for $N_f$ flavors, then a downlifted solution of AMC and PMC for $N_f-1$ flavors (while keeping $N_c$ fixed) can be constructed as:
\be
\ell\!\left(r^\prime \right) \equiv \sum_{r} \ell\!\left(r \right) \kappa (r\to r^\prime)\;,
\label{eq:thm:downlifting}
\ee
where $r$ and $r^\prime$ are the representations of $\mathcal{G}[N_f]$ and $\mathcal{G}[N_f-1]$, respectively, and $\kappa (r\to r^\prime)$ denotes the multiplicity of $r^\prime$ in the decomposition of $r$. 
\end{lemma}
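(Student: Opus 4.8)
The plan is to verify in turn that the numbers $\ell(r')$ of Eq.~\eqref{eq:thm:downlifting} are (a) integers, (b) a solution of all AMC equations for $N_f-1$ flavors, and (c) a solution of all PMC equations for $N_f-1$ flavors. The organizing remark is that the first mass deformation already isolates the $(N_f-1)$-flavor chiral group inside the $N_f$-flavor one: under $\mathcal{G}[N_f]\to\mathcal{G}[N_f,1]$ a representation decomposes as $r=\bigoplus_{\hat r_1}\kappa(r\to\hat r_1)\,\hat r_1$, and the pieces $\hat r_1$ with $H_1=0$ are exactly the representations $r'$ of $\mathcal{G}[N_f-1]$ via Eq.~\eqref{eq:pmc_rep_identification}. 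Setting $\ell(\hat r_1)\equiv\sum_r\ell(r)\,\kappa(r\to\hat r_1)$, the content of PMC$[N_f,1]$, Eq.~\eqref{eq:PMC_G1_indices}, is precisely that $\ell(\hat r_1)=0$ whenever $H_1\neq0$; hence the $\ell(r')$ of the lemma equal $\ell(\hat r_1^{\,H_1=0})$, and one obtains the same value whether $\kappa(r\to r')$ is read as the $H_1=0$ restriction multiplicity or as the full $SU(N_f)\!\downarrow\!SU(N_f-1)$ branching multiplicity, since the two differ only by $H_1\neq0$ pieces of vanishing index. Point (a) is then immediate: each $\ell(r')$ is a $\mathbb{Z}$-linear combination, with non-negative integer coefficients $\kappa$, of the integers $\ell(r)$.

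For (b), the key input is that $SU(N_f-1)\subset SU(N_f)$ is a regular embedding of index one, so that the Dynkin index, the cubic anomaly coefficient and the dimension are all additive under restriction (with $U(1)_B$ inert). Since the relevant coefficient of a hadron $(r_L,r_R,b)$ is $T(r_L)\,d(r_R)\,b$ for the mixed triangle $[SU(N_f)_L]^2U(1)_B$, the cubic anomaly coefficient of $r_L$ times $d(r_R)$ for $[SU(N_f)_L]^3$, and zero for $[U(1)_B]^3$, summing over all $\mathcal{G}[N_f,1]$ pieces of $r$ recovers the full $SU(N_f)$ coefficient: $\sum_{\hat r_1\subset r}\mathcal{A}_{N_f-1}(\hat r_1)=\mathcal{A}_{N_f}(r)$. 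Therefore
\begin{align*}
\sum_{r'}\ell(r')\,\mathcal{A}_{N_f-1}(r')
&=\sum_{\hat r_1:\,H_1=0}\ell(\hat r_1)\,\mathcal{A}_{N_f-1}(\hat r_1)
=\sum_{\hat r_1}\ell(\hat r_1)\,\mathcal{A}_{N_f-1}(\hat r_1)\\
&=\sum_{r}\ell(r)\sum_{\hat r_1\subset r}\mathcal{A}_{N_f-1}(\hat r_1)
=\sum_{r}\ell(r)\,\mathcal{A}_{N_f}(r)=\mathcal{A}(q)\,,
\end{align*}
where the second equality uses PMC$[N_f,1]$ and the last step uses that $\mathcal{A}(q)$ is the \emph{same} constant ($N_c$ for the cubic triangles, $1$ for the mixed ones) in the $N_f$- and $(N_f-1)$-flavor theories. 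So the downlifted indices solve all AMC equations for $N_f-1$ flavors.

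For (c), I would invoke the identification PMC$[N_f,h]\sim$PMC$[N_f-1,h-1]$ for $2\le h\le N_f-2$ established around Eq.~\eqref{eq:pmc_identification}. Unwinding Eq.~\eqref{eq:PMC_Gh_indices}, each PMC$[N_f,h]$ with $h\ge2$ refers only to representations $\hat r_i\in\mathcal{G}[N_f,i]$ with $H_1=0$: the first step restricts to $\hat r_1$ with $H_1=0$, the factor $U(1)_{H_1}$ is untouched by all later mass deformations so the $H_1=0$ sector is closed under the subsequent decompositions, and the $H_1\neq0$ pieces stay at zero index by PMC$[N_f,1]$. Under the dictionary $H'_i\leftrightarrow H_{i+1}$ together with $\hat r_1^{\,H_1=0}\cong r'$ and $\ell(r')=\ell(\hat r_1^{\,H_1=0})$, the equation PMC$[N_f,h]$ becomes literally PMC$[N_f-1,h-1]$ for the downlifted indices; letting $h$ run over $2\le h\le N_f-2$ yields all of PMC$[N_f-1]$. (One also checks that an $H_1=0$ piece of a genuine color-singlet hadron of the $N_f$-flavor theory is again a genuine color-singlet hadron of the $(N_f-1)$-flavor theory, so the downlifted data is a bona fide putative spectrum.)

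The step I expect to be the main obstacle is (b). Naively, restricting from $SU(N_f)$ to $SU(N_f-1)$ discards part of the anomaly --- the part that, after restriction, moves into the mixed $[SU(N_f-1)]^2U(1)_{H_1}$ and $[SU(N_f-1)]U(1)_{H_1}^2$ triangles and is carried by the $H_1\neq0$ sub-multiplets --- so the downlifted spectrum does \emph{not} reproduce the $(N_f-1)$-flavor anomaly hadron by hadron (indeed $\mathcal{A}_{N_f-1}(\hat r_1^{\,H_1=0})\neq\mathcal{A}_{N_f}(r)$ in general, for instance $T_{SU(N_f-1)}({\tiny\yng(3)})\neq T_{SU(N_f)}({\tiny\yng(3)})$). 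What rescues the argument is exactly PMC$[N_f,1]$: it forces every $H_1\neq0$ multiplet to carry zero index, so the \emph{entire} $[SU(N_f-1)]^3$ and $[SU(N_f-1)]^2U(1)_B$ anomaly of the putative $N_f$-flavor spectrum lives in the $H_1=0$ sector, i.e.\ in the downlifted spectrum. This is the crux, and it is why the inductive step must treat AMC and PMC as a single package.
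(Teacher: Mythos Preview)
Your proof is correct and follows essentially the same route as the paper: the AMC step is precisely the paper's Eq.~\eqref{eq:example_concerned}--\eqref{eq:AMC_downlifting} (additivity of the anomaly under $SU(N_f)\!\downarrow\!SU(N_f-1)$ restriction, then PMC$[N_f,1]$ to kill the $H_1\neq0$ pieces), and the PMC step is the identification of Eq.~\eqref{eq:pmc_identification}. Your additional remarks---the explicit integrality check, the index-one embedding as the reason additivity holds, and the observation that the two readings of $\kappa(r\to r')$ agree modulo PMC$[N_f,1]$---are welcome clarifications not spelled out in the paper but do not constitute a different argument.
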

 
\begin{proof}
The proof of the Lemma~\ref{thm:downlifting} mainly consists of using the coherent structure of PMC that was discussed in Section~\ref{sec:algebra_PMC}.

In the QCD-like theory with $N_c$ colors and $N_f$ flavors, the putative color-singlet hadrons are characterized by representations $r\in \mathcal{G}[N_f]$, whose indices $\ell\!\left(r \right)$ are the solutions of AMC and PMC equations for $N_f$ flavors when chiral symmetry is assumed unbroken. 

When giving mass to one flavor, each $r\in \mathcal{G}[N_f]$ is decomposed into $r_1$ and $\hat{r}_1\in \mathcal{G}[N_f,1]$, where $r_1$ and $\hat{r}_1$ carry $H_1\neq 0$ and $H_1=0$ charges under the $U(1)_{H_1}$ subgroup, respectively. The indices of $r_1$ and $\hat{r}_1$ are calculable from those of $r$. Specifically, the indices of $r_1$ vanish as in PMC$[N_f,1]$ (see Eq.~\eqref{eq:PMC_G1_indices}), while the indices of $\hat{r}_1$ are given by
\be
\ell\!\left(\hat{r}_1 \right) = \sum_{r} \ell\!\left(r \right) \kappa (r\to \hat{r}_1)\;.
\label{eq:thm:downlifting_2}
\ee
The indices $\ell\!\left(\hat{r}_1 \right)$ do not vanish, but they solve PMC$[N_f, h]$ equations with $2\leq h\leq N_f-2$, which are obtained by further decomposing $\hat{r}_1$ with more than one massive flavor; see Fig.~\ref{PMC_decomposition}. 

It implies that, due to the identification between the PMC equations as in Eq.~\eqref{eq:pmc_identification}, and accordingly, between each $\hat{r}_1\in \mathcal{G}[N_f,1]$ and $r^\prime\in \mathcal{G}[N_f-1]$ as in Eq.~\eqref{eq:pmc_rep_identification}, the indices constructed as in Eq.~\eqref{eq:thm:downlifting} solve PMC$[N_f-1, h-1]$ equations with $2\leq h\leq N_f-2$, which are all the PMC equations for the theory with $N_c$ colors and $N_f-1$ flavors. 

Next, we show that the indices constructed as in Eq.~\eqref{eq:thm:downlifting} also solve AMC for $N_c$ colors and $N_f-1$ flavors. 

Let us start by evaluating the anomaly coefficients of the $[SU(N_f)_{L,R}]^3$ or $[SU(N_f)_{L,R}]^2 U(1)_B$ triangles using the generators of the $SU(N_f-1)_{L,R}$ Lie subalgebras, where each irreducible representation of $SU(N_f)_{L,R}$ is now reducible under the $SU(N_f-1)_{L,R}$ subalgebra, hence it can be decomposed into a direct sum of a series of irreducible representations of $SU(N_f-1)_{L,R}$. Along such a decomposition, we rewrite the anomaly coefficients $\mathcal{A}(r)$ in Eq.~\eqref{eq:AMC} as follows,  
\be
\mathcal{A}(r)=\sum_{\hat{r}_1} \kappa(r\to \hat{r}_1) \mathcal{A}(\hat{r}_1) +\sum_{r_1} \kappa(r\to r_1) \mathcal{A}(r_1) \ ,
\label{eq:example_concerned}
\ee
where on the right-hand side the sum runs over all the representations in the decomposition. The relevant representations coincide with those in $\hat{r}_1, r_1 \in \mathcal{G}[N_f,1]$, and accordingly, $\mathcal{A}(\hat{r}_1)$ and $\mathcal{A}(r_1)$ stand for their anomaly coefficients of either the $[SU(N_f-1)_{L,R}]^3$ or $[SU(N_f-1)_{L,R}]^2 U(1)_B$ triangles. Plugging this equation back to Eq.~\eqref{eq:AMC} (i.e., the AMC equation for $N_c$ colors and $N_f$ flavors) and switching the order of the sums, we obtain
\begin{align}
\mathcal{A}(q) &= \sum_r \ell\!\left( r \right) \left(\sum_{\hat{r}_1} \kappa(r\to \hat{r}_1) \mathcal{A}(\hat{r}_1) +\sum_{r_1} \kappa(r\to r_1) \mathcal{A}(r_1) \right) \nonumber\\
&= \sum_{\hat{r}_1} \left(\sum_r  \ell\!\left( r \right) \kappa(r\to \hat{r}_1) \right) \mathcal{A}(\hat{r}_1) + \sum_{r_1} \left(\sum_r  \ell\!\left( r \right) \kappa(r\to r_1) \right) \mathcal{A}(r_1)   \nonumber \\
&= \sum_{\hat{r}_1} \ell\!\left(\hat{r}_1 \right) \mathcal{A}(\hat{r}_1)\;,
\label{eq:AMC_downlifting}
\end{align}
where in the last step we have used Eq.~\eqref{eq:thm:downlifting_2} for $\hat{r}_1$, and crucially the PMC$[N_f,1]$, namely Eq.~\eqref{eq:PMC_G1_indices}, which sent the indices of all $r_1$ to zero. 

Again, due to the identification between representations $\hat{r}_1\in \mathcal{G}[N_f,1]$ and $r^\prime\in \mathcal{G}[N_f-1]$ as in Eq.~\eqref{eq:pmc_rep_identification}, Eq.~\eqref{eq:AMC_downlifting} is nothing but the AMC equation for $N_c$ colors and $N_f-1$ flavors, i.e.,
\be
\mathcal{A}(q) = \sum_{r^\prime} \ell\!\left(r^\prime \right) \mathcal{A}(r^\prime)
\ee
whose solution is given by the indices constructed as in Eq.~\eqref{eq:thm:downlifting}.
\end{proof}

We emphasize that the proof crucially relies on the coherent structure shown in Fig.~\ref{PMC_coherent}, which was only discovered recently~[\citen{Ciambriello:2022wmh}]. For numerous examples illustrating the strategy of ``downlifting'', see~[\citen{Ciambriello:2024msu}]. We comment that once a specific putative spectrum is assumed, it may be more straightforward to solve the AMC and PMC directly and verify explicitly that no integral solutions exist for the indices of the color-singlet hadrons. In such cases, downlifting may seem unnecessary. However, since we cannot determine whether a particular hadron dynamically forms or not, it is essential that we make no assumptions on the spectrum in order to rigorously prove Theorem~\ref{thm:main}. Indeed, the novelty of downlifting~[\citen{Ciambriello:2024xzd}] is to provide a general proof applicable to a general spectrum of \emph{any} putative massless composite fermions, as long as they are color-singlet hadrons.

\begin{example}
In the proof of Lemma~\ref{thm:downlifting}, Eq.~\eqref{eq:example_concerned} may look a bit abstract, and it may need further clarification. In general, it means that the anomaly coefficients match before and after decomposition, whose validity is guaranteed at the group-theoretical level. 

Here, we illustrate this equation with an example. Let us consider the putative baryon in the representation 
\be
r_2=\left(\tiny\yng(2)\ , \tiny\yng(1)\ , 1\right) \quad\in\quad  \mathcal{G}[N_f]
\ee
which is valid for $N_c=3$ and any $N_f\geq 3$ (see also Eq.~\eqref{eq:spectrum_example1}). One can decompose the $SU(N_f)_{L,R}$ representations into the direct sum of a series of $SU(N_f-1)_{L,R}$ representations. The relevant decompositions for us are
\begin{align}
\tiny{\yng(2)} &\to \tiny{\yng(2)} + \tiny{\yng(1)} + s\;, \\
\tiny{\yng(1)} &\to \tiny{\yng(1)} + s \;,
\end{align}
where $s$ stands for the singlet representation of either $SU(N_f)$ or $SU(N_f-1)$.
Let us check Eq.~\eqref{eq:example_concerned} by evaluating the $[SU(N_f)_L]^2 U(1)_B$ anomaly coefficients for $r_2$ first using the Dynkin index and dimensionality under $SU(N_f)_{L,R}$: 
\be
\text{LHS}=T({\tiny\yng(2)}) d({\tiny\yng(1)}) = (N_f+2) N_f, 
\ee
and then using those of the $SU(N_f-1)_{L,R}$ subalgebra:  
\begin{align}
\text{RHS}&=T(\tiny{\yng(2)}) d(\tiny{\yng(1)}) + T(\tiny{\yng(2)}) d(s) +  T(\tiny{\yng(1)}) d(\tiny{\yng(1)}) + T(\tiny{\yng(1)}) d(s) + T(s) d(\tiny{\yng(1)}) + T(s) d(s) \nonumber \\
&= (N_f+1) (N_f-1) + (N_f+1) + (N_f-1) + 1 +0 + 0\nonumber \\
&= (N_f+2) N_f\;. 
\end{align}
As expected, the two results match. The same applies to other representations.
\end{example}

\section{Concluding Remarks}
\label{sec:discussion}

Tackling emergent phenomena in strongly coupled systems remains one of the most challenging tasks in both high-energy physics and condensed matter physics. The prominent example related to particle physics is QCD, where the short-distance theory is described by a gauge theory of quarks and gluons, while the emergent long-distance theory is described by color-singlet hadrons, which are composite particles formed due to color screening. 

In this review, we study such an emergent phenomenon of color screening and its relation with $\chi$SB in a class of QCD-like theories with generic $N_c$ and $N_f$, where the main result was already summarized in Theorem~\ref{thm:main}. 
\begin{itemlist}
\item~\label{arg1} On the one hand, it states that the chiral symmetry group $SU(N_f)_L\times SU(N_f)_R \times U(1)_B$ must be spontaneously broken once the theory flows to an IR-free fixed point with full color screening. 
\item~\label{arg2} On the other hand, the original Vafa-Witten theorem~[\citen{Vafa:1983tf}] states that the maximal vectorlike subgroup $SU(N_f)_V\times U(1)_B$ is not spontaneously broken. 
\end{itemlist}
Consequently, combining these two arguments, we see that the low-energy description of a QCD-like theory in the color-screened, IR-free phase is given by the non-linear sigma model (NLSM) with the symmetry breaking pattern
\be
SU(N_f)_L\times SU(N_f)_R \to SU(N_f)_V. 
\ee
Accordingly, there are massless particles known as the Nambu-Goldstone bosons (i.e., pions) that correspond to the broken generators. In the NLSM, the crucial ingredient for matching the 't Hooft anomaly of broken generators is the WZW term~[\citen{Wess:1971yu, Witten:1983tw}]; see also~[\citen{Lee:2020ojw, Yonekura:2020upo, Yonekura:2019vyz}]. 

While the final result has been established long ago as the conventional wisdom, recent works~[\citen{Ciambriello:2022wmh, Ciambriello:2024xzd, Ciambriello:2024msu}] provide a proof, thereby elevating it to the status of a rigorous (physical) theorem. It is worth emphasizing that Theorem~\ref{thm:main} concerns only the assumption that the theory in the IR is described by a free fixed-point with color-singlet hadrons. This assumption is itself unambiguous and does not rely on potentially subtle definitions of confinement in the presence of matter in the fundamental representation.~\footnote{That is, whether or not the color-singlet spectrum of hadrons with integer baryon numbers constituents a valid definition of confinement is a separate issue, the conventional wisdom of Higgs-confinement continuity says that the answer is no. See however a perspective from strip algebra~[\citen{Gagliano:2025gwr}].} 

Nonetheless, major challenges remain. Still, we are not able to coherently understand the IR phases of QCD-like theories for specific $N_c$ and $N_f$. More technically, a generalization of Theorem~\ref{thm:main} is still needed for $N_f$ smaller than the smallest prime factor of $N_c$. For example, in theories resembling the real-world QCD --- that is, with $N_c=3$ --- Theorem~\ref{thm:main} applies to all $N_f\geq 3$, but not to the case $N_f=2$. Furthermore, the proof crucially relies on the use of PMC, which is only valid in vectorlike theories. As a result, statements analogous to Theorem~\ref{thm:main} can be generalized to other vectorlike theories with different gauge groups and matter fields, but not to the theories where PMC cannot be applied, e.g., the chiral gauge theories.

Concerning the chiral gauge theories --- such as the SM of particle physics --- our understanding remains incomplete, particularly in the strongly-coupled regime. The primary challenge lies in the lack of reliable computational tools to study these theories. While lattice formulations of QCD and in general vectorlike theories~[\citen{Wilson:1974sk, Kogut:1974ag, Susskind:1976jm}] are very successful, regularizing chiral fermions on the lattice remains a subtle issue due to the Nielsen-Ninomiya theorem~[\citen{Nielsen:1980rz, Nielsen:1981xu, Nielsen:1981hk}]. For pedagogical introductions and reviews on chiral symmetries and anomalies on the lattice, see e.g.~[\citen{Chandrasekharan:2004dph, Kaplan:2009yg, Creutz:2011ati}], and for a lattice formulation of the SM, see~[\citen{Wang:2018jkc}]. An alternative computational tool may be the functional renormalization group approach~[\citen{Li:2025tvu}]. Another promising direction involves the use of 't Hooft anomalies of generalized global symmetries; see e.g.~[\citen{Bolognesi:2021jzs, Bolognesi:2023sxe, Konishi:2024rjz}] for reviews on the recent developments.

\section*{Acknowledgments}
I am deeply indebted to Luca Ciambriello, Andrea Luzio, and Marcello Romano, and in particular Roberto Contino, for their many insights and for the fruitful collaboration that led to~[\citen{Ciambriello:2022wmh, Ciambriello:2024xzd, Ciambriello:2024msu}]. 
I thank Mohamed M. Anber, Soo-Jong Rey, Kazuki Sakurai, and Ariel R. Zhitnitsky for useful comments and questions during a workshop held at Mainz Institute for Theoretical Physics (MITP) of the Cluster of Excellence PRISMA+ (Project ID 390831469).
I thank Aleksey Cherman, Theo Jacobson, and Yuya Tanizaki for their useful comments on defining confinement in the presence of fundamental matter during a workshop ``Bridging analytical and numerical methods for quantum field theory'' held at ECT$^*$ Trento.
I also thank the audiences at ITP-CAS Beijing, MITP, UC Louvain, Scuola Normale Superiore in Pisa, for feedbacks when some of the results were presented, and referees of~[\citen{Ciambriello:2022wmh, Ciambriello:2024xzd, Ciambriello:2024msu}] for useful comments; these inputs were helpful in shaping the structure of the present review. 
Last but not least, I am honored to thank Prof. K K Phua for the invitation to write this review.
This work is partially supported by ``Exotic High Energy Phenomenology" (X-HEP), a project funded by the European Union - Grant Agreement n.101039756. Funded by the European Union, views and opinions expressed are however those of the author(s) only and do not necessarily reflect those of the European Union or the ERC Executive Agency (ERCEA). Neither the European Union nor the granting authority can be held responsible for them.
This research was supported by the Munich Institute for Astro-, Particle and BioPhysics (MIAPbP) which is funded by the Deutsche Forschungsgemeinschaft (DFG, German Research Foundation) under Germany's Excellence Strategy – EXC-2094 – 390783311.

\section*{ORCID}
\noindent Ling-Xiao Xu - \url{https://orcid.org/0000-0002-4970-2404}
\appendix

\section{Brief Comparison with Other Approaches}
\label{app1:other_approaches}
In this appendix, we make a few brief remarks by comparing the results reviewed in Section~\ref{sec:proof_2_downlifting} with some other existing findings in the literature. 

Different from the downlifting approach, early works~[\citen{Frishman:1980dq, Farrar:1980sn, Schwimmer:1981yy, Cohen:1981iz, Kaul:1981fd, Takeshita:1981sx}] have attempted to establish an one-to-one correspondence in Eq.~\eqref{eq:thm:downlifting} between $r\in \mathcal{G}[N_f]$ and $r^\prime \in \mathcal{G}[N_f-1]$ (accordingly, each representation in their decompositions as well), a line of reasoning referred to as ``\emph{$N_f$ independence}''. Here are some comments~[\citen{Ciambriello:2022wmh}].
\begin{itemlist}
\item That is, when ``$N_f$ independence'' holds, there exists only one $r$ that can survive in the sum on the right-hand side of Eq.~\eqref{eq:thm:downlifting} for each $r^\prime$. \item Another subtlety concerns equivalent tensors: hadrons in the same representation of the chiral symmetry group can be interpolated by different composite operators with the same quantum number. However, such an equivalence does not necessarily persist as $N_f$ varies while $N_c$ is fixed, leading to another obstruction to ``$N_f$ independence''. 
\end{itemlist}
As shown in the proof~[\citen{Ciambriello:2022wmh}] and numerous examples in~[\citen{Ciambriello:2022wmh, Ciambriello:2024msu}], this line of reasoning is valid only under some specific assumption on the putative spectrum relative to the values of $N_c$ and $N_f$:
\be
(n_q+n_{\bar{q}})_{\text{max}}= (b N_c +2 n_{\bar{q}})_{\text{max}} <N_f \;,
\ee
where $(n_q)_{\text{max}}$ and $(n_{\bar{q}})_{\text{max}}$ are the maximal numbers of quark and anti-quark constituents in any hadron in the putative spectrum, respectively. We refer the reader to~[\citen{Ciambriello:2022wmh}] for the derivation of this equation.
Clearly, since we do not have dynamical control over the formation of putative hadrons, ``$N_f$ independence'' fails as a general proof of Theorem~\ref{thm:main}.

Another interesting line of reasoning relies on the $SU(N_f|N_f)$ \emph{superalgebra}, where the Lie algebra of the chiral symmetry group in Eq.~\eqref{def:G} is a sub-algebra generated by the even generators of $SU(N_f|N_f)$~[\citen{Schwimmer:1981yy}]. (See also~[\citen{BahaBalantekin:1981fmm, Bars:1982ps, Bars:1984rb}] for reviews.) Here are some remarks.
\begin{itemlist}
\item The key insight is that the grading of the superalgebra is defined so that the odd generators mix together opposite fermion helicities. As a result, when one flavor is assigned a mass, representations of unbroken $SU(N_f-1|N_f-1)$ superalgebra containing the massive flavor come in pairs with opposite grading; accordingly, the representations of $\mathcal{G}[N_f,1]$ (see Eq.~\eqref{def:G_1}) come in pairs with opposite helicities, hence PMC$[N_f,1]$ are trivially satisfied by \emph{any} single representation of $SU(N_f|N_f)$~[\citen{Schwimmer:1981yy}]. 
\item Such an argument is elegant, but is still subject to the complication from equivalent tensors of $\mathcal{G}[N_f,1]$, which cannot be captured by $SU(N_f-1|N_f-1)$. This is because the fully-antisymmetric tensor $\epsilon_{i_1 i_2\cdots i_{N_f-1}}$ is only an invariant tensor of $SU(N_f-1)$ Lie algebra, but not for the $SU(N_f-1|N_f-1)$ superalgebra~[\citen{BahaBalantekin:1981fmm, Bars:1982ps, Bars:1984rb}]. As a result, it appears that superalgebra representations provide solutions to PMC, but evidence is still lacking whether \emph{all} PMC solutions can be captured by the superalgebra.
\end{itemlist}
To better illustrate the second point, let us consider the following example.

\begin{example}
Consider the theory with $N_c=5$ and $N_f=3$, with the interpolating composite operators characterized by the following tensors
\bea
T_1=(\tiny\yng(3,1), s, 1, 1)\ ,\quad\quad\quad T_2=(\tiny\yng(2),\tiny\yng(1,1), 1 , 1)\;,
\eea
which are equivalent under $\mathcal{G}[3,1]=SU(2)_L\times SU(2)_R\times U(1)_{H_1} \times U(1)_B$. Here, $s$ stands for the singlet representation of $SU(2)_{L,R}$. These tensors characterize baryons with the baryon number $b=1$ when one quark flavor is assigned a real and positive mass. 
\begin{itemlist}
\item Since they also carry non-vanishing charges under $U(1)_B$, their indices are subject to the constraint of PMC$[3,1]$, which requires $\ell(T_1)+\ell(T_2)=0$.
It is possible to have $\ell(T_1)=-\ell(T_2)\neq 0$ and build a Dirac mass term by pairing $T_1$ and $T_2$ since they are equivalent. Notice that in this example, equivalent tensors are closely related to the two-index fully antisymmetric tensor $\epsilon_{i_1 i_2}$, which is an invariant tensor of $SU(2)$, and correspondingly, a column of two boxes in the Young diagram is a singlet. 

\item From the point of view of the superalgebra, $T_1$ and $T_2$ come from the decomposition of two independent irreducible representations of $SU(2|2)$, e.g., $\tiny\young(\diagup\diagup\diagup,\diagup)$ and $\tiny\young(\diagup\diagup\diagup\diagup)$, where each of them is in a pair with opposite grading. (Here we follow the same notation of Young diagrams as in~[\citen{BahaBalantekin:1981fmm, Bars:1982ps, Bars:1984rb}].) Accordingly, $T_1$ and $T_2$ come in separate pairs of opposite helicities. 
Hence, the superalgebra solution requires $\ell(T_1)=0$ and $\ell(T_2)=0$.
\end{itemlist}
In this example, we see that when equivalent tensors appear, individual superalgebra equations are more restrictive than the corresponding PMC equations. It would be quite surprising if eventually superalgebra yields exactly the same solutions as PMC. 
\end{example}

Compared to either $N_f$ independence or superalgebra, perhaps the main advantage of downlifting is that it does not rely on additional dynamical assumptions on the putative spectrum of massless hadrons. As a result, to the best of our knowledge, only downlifting can furnish a general proof of Theorem~\ref{thm:main}.


\bibliographystyle{ws-ijmpa}
\bibliography{refs}

\end{document}